\newtheorem{theorem}{Theorem}[section]
\newtheorem{corollary}[theorem]{Corollary}
\newtheorem{lemma}[theorem]{Lemma}
\newtheorem{proposition}[theorem]{Proposition}
\newtheorem{definition}[theorem]{Definition}
\newtheorem{remark}[theorem]{Remark}
\newcommand{\be}[1]{\begin{equation}\label{#1}}
\newcommand{\ee}{\end{equation}}
\numberwithin{equation}{section}
\newcommand{\ba}[1]{\begin{align}\label{#1}}
\newcommand{\ea}{\end{align}}
\numberwithin{equation}{section}
\newcommand{\ben}{\begin{equation*}}
\newcommand{\een}{\end{equation*}}
\numberwithin{equation}{section}
\newcommand{\calA}{\mathcal{A}}
\newcommand{\calC}{\mathcal{C}}
\newcommand{\calE}{\mathcal{E}}
\newcommand{\calF}{\mathcal{F}}
\newcommand{\calK}{\mathcal{K}}
\newcommand{\calM}{\mathcal{M}}
\newcommand{\calN}{\mathcal{N}}
\newcommand{\calP}{\mathcal{P}}
\newcommand{\lr}[1]{\xleftrightarrow{#1}}
\newcommand{\bbC}{\mathbb{C}}
\newcommand{\bbG}{\mathbb{G}}
\newcommand{\bbN}{\mathbb{N}}
\newcommand{\bbP}{\mathbb{P}}
\newcommand{\bbR}{\mathbb{R}}
\newcommand{\bbT}{\mathbb{T}}
\newcommand{\bbZ}{\mathbb{Z}}
\newcommand{\n}{\mathbf{n}}
\newcommand{\m}{\mathbf{m}}
\def\indf#1{\mathbf 1\left[#1\right]}   
\renewcommand{\be}{\begin{equation} }  
\renewcommand{\ee}{\end{equation} } 
\def\Z{{\mathbb Z}}
\title{Random Currents and Continuity of Ising Model's Spontaneous Magnetization} 
\author{Michael Aizenman, Hugo Duminil-Copin,  Vladas Sidoravicius  \\ 
 }
\date{
\hspace{8.5cm}  {\em To Ruty (1987 - 2014)} \qquad 
\\[5ex]
(Ver. 20 Feb 2014; notation adj.  29 June 2014)}
\begin{document}
\maketitle

\begin{abstract}  
The spontaneous magnetization is proved to vanish continuously  at the critical temperature for  a class of ferromagnetic Ising spin systems  which includes the nearest neighbor ferromagnetic Ising spin model on $\bbZ^d$  in $d=3$ dimensions.    The analysis applies also to higher dimensions, for which the result is already known, and to systems with interactions of power law decay.    
The proof employs in an essential way an extension of Ising model's {\it random current representation} to the model's infinite  volume limit.  Using it, we relate  the  continuity of the magnetization to the vanishing of the free boundary condition Gibbs state's Long Range Order parameter.  For reflection positive models the resulting criterion for continuity may be established through the  infrared bound for all but the borderline lower dimensional cases.  The exclusion applies to the one dimensional model with $1/r^2$ interaction for  which the spontaneous magnetization is known to be discontinuous at $T_c$. 
 
\end{abstract}


\section{Introduction}

The Ising model needs no introduction, being perhaps the most studied example, since its formulation by Lenz~\cite{Len20}, of a system undergoing a phase transition.    The transition it exhibits was found to be of rather broad relevance, though of course its features do not exhaust the range of possible behaviors in statistical mechanics.   The model has provided the testing ground for a large variety of techniques, which partially compensate for the lack of exactly solvable models above two dimensions.   

Our goal here is to present a tool for addressing the question of continuity of the model's phase transition.     Doing so, we advance  the technique of  the model's 
\emph{random current} representation which was developed in \cite{Aiz82} starting from  the Griffiths-Hurst-Sherman switching lemma  (which was earlier used in~\cite{GHS70} for the GHS inequality).  In this representation the onset of the Ising model's  symmetry breaking is presented as a percolation transition in a system of random currents with constrained sources.   The perspective that this picture  offers has already shown itself to be of value in yielding a range of results for the model's critical behavior (c.f.~\cite{Aiz82,AF86,ABF87,Sak07}).  The incremental step taken here is to consider directly the limiting shift invariant infinite systems of random currents.  This allows to add to the available tools arguments based on  the `uniqueness of infinite cluster' principle, which is of relevance to the question of  continuity of the state at the model's critical temperature. 

Among the specific cases for which the results presented here answer a long open question is the continuity at the critical point of the spontaneous magnetization of the standard three dimensional Ising model.   To highlight this it may be noted that what is proven here for Ising spin systems is not valid for  the broader class of the $Q$-state Potts models, with counter-examples existing  for $Q>4$ (Ising case corresponding to $Q=2$).    

\subsection{Notation} 
We focus here on the $d$-dimensional version of the ferromagnetic Ising model, on the transitive graph $\Z^d$.  More generally, the model may be formulated  on  a graph,  whose vertex set and edge set we denote  by $\bbG$ and $E\subset \bbG^2$ correspondingly.  Associated with the sites are $\pm1$ valued spin variables, whose configuration is denoted $\sigma=(\sigma_x:x\in \bbG)$.  

For a general ferromagnetic pair interaction, the system's Hamiltonian defined for finite subsets $\Lambda \subset \bbG$ and boundary conditions $\tau \in \{-1,0,1\}^ {\bbG \backslash \Lambda}$ is given by the function 
\be   \label{eq:H}
H_\Lambda^\tau(\sigma)~:=~  -\sum_{x\in \Lambda} h \sigma_x \ -\ \sum_{\{x,y\}\subset \Lambda: x\neq y}J_{x,y}\sigma_x\sigma_y  \ -\  \sum_{x\in  \Lambda: y \in \bbG \backslash \Lambda}J_{x,y}\sigma_x\tau_y   \,,
\ee 
for any $\sigma\in\{-1,1\}^{\Lambda}$, where $(J_{x,y})_{x,y\in\bbZ^d}$ is a family of nonnegative {\em coupling constants}, and $h$ is the magnetic field.

 For $\beta \in (0, \infty)$, finite volume Gibbs states with boundary conditions $\tau$ are given by  probability measures on the spaces of configurations in finite subsets $\Lambda \subset \bbG$ under which the expected values of  functions $f:\{-1,1\}^{\Lambda}\rightarrow \bbR$ are
$$\langle f\rangle_{\Lambda,\beta,h}^\tau=\sum_{\sigma\in\{-1,1\}^{\Lambda}}f(\sigma)\frac{e^{-\beta H^\tau_\Lambda(\sigma)}}{Z^\tau(\Lambda,\beta,h)}\,  ,$$
where the sum is normalized by the partition function $Z^\tau(\Lambda,\beta,h)$   so that $\langle 1\rangle_{\Lambda,\beta,h}^\tau=1$. Of particular interest is the following pair of boundary conditions (b.c.): 
\begin{itemize} [noitemsep,nolistsep]
\item {\em free b.c.}: $\tau_x = 0 $ for all $x\in \bbG\backslash \Lambda$ (or alternatively, the last term in \eqref{eq:H} is omitted).
\item  {\em plus   b.c.}: $\tau_x =  1$  for all $x\in \bbG \backslash \Lambda$.  
\end{itemize} 
The corresponding measures, or expectation value functionals,  are denoted  $  \langle  \cdots    \rangle_{\Lambda,\beta,h}^0$ and   
$  \langle  \cdots    \rangle_{\Lambda,\beta,h}^+$ (with ``$ \cdots $'' a place holder for functions of the spin configurations).
 
For each of these two boundary conditions, the finite volume Gibbs states are known to  converge  to the corresponding infinite-volume Gibbs measures  (for a discussion of the concept see e.g. \cite{Georgii}). 
By default the volume subscript will be omitted when it refers to the full graph, i.e. $\Lambda = \bbG$.

For simplicity we   focus here on the prototypical example  of $\bbG= \bbZ^d$, and  interactions which are: 
\vspace{-.4cm} 
\begin{enumerate} [noitemsep]
\item[\bf C1] translation invariant:  $J_{x,y}=J_{0,y-x}$,
\item[\bf C2] ferromagnetic:   $J_{x,y}\ge 0$,
\item [\bf C3] locally finite: $|J|:=\sum_{x\in\bbZ^d}J_{0,x}<\infty$.
\item [\bf C4] aperiodic:  for any $x\in \bbZ^d$, there exist $0=x_0,x_1,\dots,x_{m-1},x_m=x$ such that $J_{x_0,x_1}J_{x_1,x_2},\dots , J_{x_{m-1},x_m}>0$.
\end{enumerate}
 The last condition is benign since under {\bf C1}   the lattice can be divided into sub-lattices with {\bf C4}   holding for each sub-lattice, and the arguments presented below can be  adapted to such setup   
\footnote{Among the essential features of the graphs $\bbZ^d$ is their transitivity and sub-exponential growth.   Our arguments can be extended, however not to graphs of positive Cheeger constant, such as regular trees and more generally nonamenable Cayley graphs.   Nor are the statements proven below valid at such generality (the percolation aspect of this distinction  is discussed in \cite{SNP00} and references therein).}.   \\ 

Of particular interest is the model's phase transition, which in the $(\beta,h)$ plane occurs along the $h=0$ line and is reflected in the nonvanishing of the symmetry breaking order parameter: 
\be 
m^*(\beta) \ := \ \langle\sigma_0\rangle_{\beta}^+  \, .    
\ee 
For temperatures ($T\equiv \beta^{-1}$) at which $m^*(\beta) >0$, the mean magnetization at nonzero magnetic field $h$ changes discontinuously at $h=0$.  The discontinuity  is symptomatic of the co-existence of  two distinct Gibbs equilibrium states :
\begin{eqnarray} 
  \langle  \cdots    \rangle_{\beta}^+ &= &  \lim_{h\searrow 0} \langle  \cdots    \rangle_{\beta,h}  \notag \\ 
\mbox{} \\ 
\langle  \cdots    \rangle_{\beta}^- &= &  \lim_{h\nearrow 0} \langle  \cdots    \rangle_{\beta,h}  \notag 
\end{eqnarray}  
which carry the residual magnetizations:
\be 
\langle \sigma_0  \rangle_{\beta}^\pm \ = \  \pm m^*(\beta) \, ,   
\ee 
with  $m^*(\beta)$ customarily referred to as the 
 {\em spontaneous magnetization}.   \\

 Property $\bf C3$ guarantees that at small $\beta$ (in particular, for $\beta< |J|^{-1}$, see e.g.~\cite{Fisher,Dob70,Aiz82})  
 $m^*(\beta)=0$, and  there is no symmetry breaking.    However, in dimensions $d>1$ each such model exhibits a phase transition at some 
$\beta_c  \in (|J|^{-1} , \infty)$, with $ m^*(\beta)>0$ for $\beta> \beta_c$~\cite{Pei36}.  For $d=1$ such a transition occurs  if $J_{x-y} \ge 1/|x-y|^\alpha$ with  $\alpha \in (1,2)$ \cite{Dys69} and also for the boundary value $\alpha = 2$~\cite{ACCN88}, in which case $m^*(\beta)$ is discontinuous at $\beta_c$~\cite{Tho69,ACCN88}).  \\

\subsection{The continuity question} 

The main result presented here addresses the following two related questions concerning the continuity of the correlation functions at this transition: 
\begin{enumerate} 
\item[Q1:]  Is $m^*(\beta)$ continuous at $\beta_c$, or equivalently  
is\footnote{Being the limit of a decreasing sequence of (finite volume) continuous functions, $m^*(\beta)$ is upper semicontinuous, and hence  
$m^*(\beta)  =  \lim_{\varepsilon \searrow 0} \  m^*(\beta + \varepsilon)$ for all $\beta \ge 0$ \cite{LML72}.}:    
\be  \label{eq:cont_m}
  \lim_{\beta \searrow \beta_c} \   m^*(\beta) \ = \  0 \, ?
\ee 
\item[Q2:]  Is the   Gibbs state $  \langle  \cdots    \rangle^+_\beta$ continuous at $\beta_c$?  
 \end{enumerate} 

By the arguments of \cite{Leb77, Leb72}  
for ferromagnetic Ising models in the class discussed here the answers to the two questions is the same.        (For completeness the argument is summarized here in Appendix~\ref{app:continuity}.)

As it is often the case, questions which at the level of Statistical Mechanics concern continuity of Gibbs states  have  Thermodynamic level manifestation in terms of differentiability properties of the   \emph{pressure} 
\be  \label{eq:pressure}
P(\beta, h) \ := \  \lim_{L \to \infty} \frac{1} {|\Lambda_L|} \log Z^\tau(\Lambda_L,\beta,h),
\ee
where $\Lambda_L=[-L,L]^d$ (with $\frac{-1}{\beta} P(\beta, h)$ also referred to as the \emph{free energy}). 
 On general grounds, the limit \eqref{eq:pressure} is known to: i) exist, ii) be independent of the boundary conditions, and iii) yield a jointly convex function of $(\beta, \beta h) $. 
   
 For ferromagnetic Ising models it is  known that  for $h\neq0$ the function $P(\beta, h) $ is analytic in $\beta$ and $h$~\cite{LeeYang} and thus the continuity questions are limited to the line $h=0$.

 J.L. Lebowitz~\cite{Leb77} proved that for any $\beta$ the differentiability of the free energy with respect to $\beta$ (in the class of models described above) is equivalent to the continuity of the expectation values of even spin functions (i.e. functions which are invariant under global spin flip) averaged over any of the translation invariant Gibbs states, or also equivalently  to the relation 
\be  \label{eq:f=pm}
   \langle  \cdots    \rangle^0_{\beta_c} \ = \ \frac{1}{2} \left[  \langle  \cdots    \rangle^+_{\beta_c} +  \langle  \cdots    \rangle^-_{\beta_c} \right],
\ee 
where $  \langle  \cdots    \rangle^-_{\beta_c}$ is the Gibbs measure with  {\em minus   b.c.}, or simply the image of 
  $  \langle  \cdots    \rangle^+_{\beta_c}$ under global spin flip.   
Furthermore, it was pointed out there that by convexity  the  differentiability condition can fail at only a countable set of $\beta$.    For finite range models T.~Bodineau~\cite{Bod06} reduced that to the one point set consisting of  just  the critical point $\beta_c$, and furthermore showed that the magnetization is also continuous at all $\beta\neq \beta_c$. (Some further related results are mentioned below.) 
It is generally expected that for all Ising models of the kind described above ([{\bf C1 - C4}])  equation  \eqref{eq:f=pm}  is  met also at $\beta_c$.  

When condition  \eqref{eq:f=pm} holds that it is not due to most general thermodynamic reason, as it fails for the ferromagnetic nearest-neighbor Potts models with $Q$ large enough: $Q>4$ in $d=2$ dimensions (see \cite{Bax73} for exact results, and \cite{KS,LMR,DST13,DC_pf} for partial rigorous results) and $Q>2$ for $d\ge3$ (see \cite{KS,BCC} for partial mathematical results in this direction).   
The failure of  \eqref{eq:f=pm} implies that \eqref{eq:cont_m} also fails, i.e $m^*(\beta_c) \neq 0$ (as is explained in Appendix~\ref{app:continuity}).  However the converse does not seem to be tautologically true, as is indicated\footnote{It is expected, but not proven, that in this  borderline case \eqref{eq:f=pm} is satisfied, while $m^*(\beta_c)\neq 0$.}
 by the special case of the one-dimensional  model with $J_{x,y} = 1/|x-y|^2$, c.f. \cite{Tho69,ACCN88}).

\subsection{Statement of the main results} 

Relevant to  the continuity of the spontaneous magnetization is the Long Range Order (LRO) parameter  which is defined by:    
\be 
  M_{LRO}(\beta)^2 \ := \  \lim_{n\to \infty}  \frac{1}{|\Lambda_n|} \sum_{x\in \Lambda_n} \langle\sigma_0\sigma_x\rangle_{\beta}^0   \, ,  
\ee 
 where $\Lambda_n=[-n,n]^d$, for the model on $\Z^d$ (the limit existing by monotonicity arguments),  or the LRO parameter's  variant 
\be 
\widetilde M_{LRO}(\beta)^2  \  :=  \  \inf_{B\subset \Z^d, |B|<\infty}    \frac{1}{|B|^2} \sum_{x,y \in B}  \langle  \sigma_x \sigma_y   \rangle_{\beta}^0   \ \equiv  \  \inf_{B\subset \Z^d, |B|<\infty}  \left \langle \left[\frac{1}{|B|} \sum_{x\in B} \sigma_x \right]^2 \right \rangle_{\beta}^0 \,   
\ee 
which satisfies 
\be 
\inf_{x\in \bbZ^d} \langle \sigma_0 \sigma_x \rangle^0_\beta  \ \le \  \widetilde M_{LRO}(\beta)^2 \ \le \   M_{LRO}(\beta)^2 \,.
\ee    

It may be noted that whereas 
 $m^*(\beta_c)$ provides  direct information about the states at $\beta > \beta_c$, the monotonicity arguments of~\cite{Leb77}   imply that 
$M_{LRO}(\beta_c)$ provides  direct information about the states at $\beta < \beta_c$ and, furthermore,   the following relation holds.

\begin{proposition}  For any translation invariant ferromagnetic Ising model on $\Z^d$:
at all $\beta \ge 0$    
\be    M_{LRO}(\beta) \  \le \   m^*(\beta)  
\ee 
with equality  holding at values of $\beta$ at which $P(\beta,0)$ is continuously differentiable. 
\end{proposition}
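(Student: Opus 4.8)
The plan is to sandwich the free-state two-point function between $0$ and the plus-state two-point function, and then to use that the plus state obeys a law of large numbers for the magnetization. Throughout, write $\Lambda_n=[-n,n]^d$ and $m_{\Lambda_n}:=|\Lambda_n|^{-1}\sum_{x\in\Lambda_n}\sigma_x$.

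\emph{Step 1: monotonicity in the boundary condition.} I would first record that for every $\beta\ge 0$ and every $x\in\bbZ^d$,
\[
0\ \le\ \langle\sigma_0\sigma_x\rangle^0_\beta\ \le\ \langle\sigma_0\sigma_x\rangle^+_\beta .
\]
The left inequality is the first Griffiths inequality. For the right one, note that in a finite box $\Lambda_n$ the plus b.c.\ amounts to replacing in the free Hamiltonian the field $h$ by the site-dependent fields $h+\sum_{y\notin\Lambda_n}J_{x,y}\ge h$; by the second Griffiths (GKS) inequality $\langle\sigma_0\sigma_x\rangle_{\Lambda_n}$ is nondecreasing in every coupling constant and in every such field, whence $\langle\sigma_0\sigma_x\rangle^0_{\Lambda_n}\le\langle\sigma_0\sigma_x\rangle^+_{\Lambda_n}$; then let $n\to\infty$, using that both finite-volume states converge (as recalled above).

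\emph{Step 2: ergodicity of the plus state.} Under \textbf{C2}--\textbf{C3}, $\langle\cdot\rangle^+_\beta$ stochastically dominates every Gibbs state; being a maximal element of the set of Gibbs states in the stochastic order it is an extreme point of that convex set, and being translation invariant it is therefore ergodic under the translations of $\bbZ^d$ (see e.g.~\cite{Georgii}). By the multidimensional ergodic theorem applied to $\sigma_0$, $m_{\Lambda_n}\to\langle\sigma_0\rangle^+_\beta=m^*(\beta)$ almost surely, and (being bounded by $1$) also in $L^1$, under $\langle\cdot\rangle^+_\beta$; hence by dominated convergence
\[
\frac1{|\Lambda_n|}\sum_{x\in\Lambda_n}\langle\sigma_0\sigma_x\rangle^+_\beta\ =\ \langle\sigma_0\, m_{\Lambda_n}\rangle^+_\beta\ \longrightarrow\ m^*(\beta)^2 \qquad (n\to\infty).
\]

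\emph{Step 3: conclusion.} Since $M_{LRO}(\beta)^2=\lim_n|\Lambda_n|^{-1}\sum_{x\in\Lambda_n}\langle\sigma_0\sigma_x\rangle^0_\beta$ exists, Steps 1 and 2 give
\[
M_{LRO}(\beta)^2\ \le\ \lim_{n\to\infty}\frac1{|\Lambda_n|}\sum_{x\in\Lambda_n}\langle\sigma_0\sigma_x\rangle^+_\beta\ =\ m^*(\beta)^2,
\]
which is the claimed bound (trivially an equality at $\beta$ with $m^*(\beta)=0$). Suppose now $\beta\mapsto P(\beta,0)$ is differentiable at $\beta$. By Lebowitz's theorem recalled around~\eqref{eq:f=pm}, relation~\eqref{eq:f=pm} then holds at this $\beta$ for all even functions, and in particular $\langle\sigma_0\sigma_x\rangle^0_\beta=\tfrac12[\langle\sigma_0\sigma_x\rangle^+_\beta+\langle\sigma_0\sigma_x\rangle^-_\beta]=\langle\sigma_0\sigma_x\rangle^+_\beta$ for every $x$, the last equality by the global spin-flip symmetry. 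Feeding this into the definition of $M_{LRO}$ and invoking the limit of Step 2 yields $M_{LRO}(\beta)^2=m^*(\beta)^2$.

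\emph{Main obstacle.} The only ingredient that is not a routine manipulation of correlation inequalities is the translation-ergodicity of $\langle\cdot\rangle^+_\beta$ used in Step 2 --- equivalently, the law of large numbers for the magnetization in the plus state. It rests on the FKG-maximality of $\langle\cdot\rangle^+_\beta$ together with the facts that a maximal Gibbs state is extremal and that a translation-invariant extremal Gibbs state is translation-ergodic; these are classical but should be cited with care, the more so since we also allow the long-range interactions permitted by \textbf{C3}.
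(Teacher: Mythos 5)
Your proof is correct. The paper itself does not spell out an argument for this proposition (it is attributed to the monotonicity arguments of~\cite{Leb77}), so there is nothing to match line by line; what you give is the standard route and it holds up: Griffiths/GKS for $\langle\sigma_0\sigma_x\rangle^0_\beta\le\langle\sigma_0\sigma_x\rangle^+_\beta$, the spatial ergodic theorem in the plus state to identify $\lim_n|\Lambda_n|^{-1}\sum_{x\in\Lambda_n}\langle\sigma_0\sigma_x\rangle^+_\beta=m^*(\beta)^2$, and Lebowitz's equivalence around~\eqref{eq:f=pm} (together with spin-flip symmetry, which turns the $\pm$ average into $\langle\sigma_0\sigma_x\rangle^+_\beta$) for the equality under differentiability of $\beta\mapsto P(\beta,0)$. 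Regarding the one ingredient you flag as delicate: you do not need to route through the abstract chain ``FKG-maximal $\Rightarrow$ extremal $\Rightarrow$ trivial tail $\Rightarrow$ ergodic'' from~\cite{Georgii,LML72}, because the paper proves exactly the required statement self-containedly as Proposition~\ref{prop:mixing} in Appendix~\ref{app:ergodicity}: the plus state is mixing (hence ergodic) for all interactions satisfying {\bf C1--C4}, including the long-range ones, by a direct Griffiths-inequality argument. Citing that proposition removes the only soft spot in your write-up. Two cosmetic remarks: the inequality $M_{LRO}\le m^*$ already follows from Steps 1--3 without any separate discussion of the case $m^*(\beta)=0$; and since $\beta\mapsto P(\beta,0)$ is convex, differentiability at a point is all that Lebowitz's theorem requires, so the hypothesis of continuous differentiability in the statement is simply a stronger sufficient condition and your weaker assumption is fine.
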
 

Our main general result is: 
   
\begin{theorem}\label{thm:continuity}
For any ferromagnetic Ising model on $\bbZ^d$ whose coupling constants $(J_{x,y})_{x,y\in\bbZ^d}$  satisfy the conditions {\bf C1-C4}: if  
\be  \label{eq:cont_cond}
 \widetilde M_{LRO}(\beta) \ = \ 0 \ 
\ee 
then also 
\be 
 m^*(\beta_c) \ =\   0 \, , 
 \ee  
 and the system has only one Gibbs state at $\beta_c$.   
 \end{theorem}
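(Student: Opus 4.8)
The plan is to transport the question into the geometry of an infinite system of \emph{doubled random currents} with free boundary conditions, where a uniqueness-of-infinite-cluster argument becomes available. The technically most demanding ingredient is the infinite-volume random current representation. Writing $\mathbf{P}_\beta$ for the law of $\mathbf{n}_1+\mathbf{n}_2$, where $\mathbf{n}_1,\mathbf{n}_2$ are independent sourceless random currents for the free-boundary Gibbs state at inverse temperature $\beta$ --- obtained as a weak limit of the finite-volume current measures --- one must establish that this limit exists, is supported on locally finite current configurations, is translation invariant (by {\bf C1}), and that the finite-volume switching-lemma identities survive the limit, giving
\[(\langle\sigma_x\sigma_y\rangle_\beta^0)^{2}=\mathbf{P}_\beta[x\leftrightarrow y]\qquad(x,y\in\bbZ^d),\]
and, via the parallel construction in which all sites are coupled to a ghost vertex of strength $h$ and one lets $h\downarrow0$ (using $m^*(\beta)=\lim_{h\downarrow0}\langle\sigma_0\rangle_{\beta,h}$),
\[m^*(\beta)^{2}=\mathbf{P}_\beta[0\leftrightarrow\infty]=:\theta(\beta),\]
the point of the second identity being that $0$ can reach the (decoupling) ghost in the limit $h\downarrow0$ precisely when it lies in an infinite cluster of $\mathbf{n}_1+\mathbf{n}_2$.

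Granting this, the next ingredient is a Burton--Keane argument applied to $\mathbf{P}_\beta$: the measure is translation invariant, the lattice $\bbZ^d$ is amenable, and $\mathbf{P}_\beta$ enjoys the requisite insertion tolerance, since raising the current on any single edge by $2$ preserves all source constraints and has conditionally positive probability (here {\bf C2}--{\bf C4}, and especially the aperiodicity {\bf C4}, guarantee that current can be rerouted along paths of strictly positive coupling). Hence $\mathbf{P}_\beta$-almost surely there is at most one infinite cluster.

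These combine as follows. For every finite $B\subset\bbZ^d$, using $0\le\langle\sigma_x\sigma_y\rangle_\beta^0\le1$, the switching identity, and then uniqueness of the infinite cluster (so that $x\leftrightarrow\infty$ and $y\leftrightarrow\infty$ imply $x\leftrightarrow y$),
\[\frac{1}{|B|^{2}}\sum_{x,y\in B}\langle\sigma_x\sigma_y\rangle_\beta^0\ \ge\ \frac{1}{|B|^{2}}\sum_{x,y\in B}\mathbf{P}_\beta[x\leftrightarrow y]\ \ge\ \frac{1}{|B|^{2}}\,\mathbf{E}_\beta[N_B^{2}]\ \ge\ \theta(\beta)^{2},\]
with $N_B:=\#\{x\in B:x\leftrightarrow\infty\}$ and the last step Jensen's inequality together with $\mathbf{E}_\beta[N_B]=\theta(\beta)\,|B|$. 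Taking the infimum over $B$ yields $\widetilde M_{LRO}(\beta)^{2}\ge\theta(\beta)^{2}$. This is to be applied at $\beta=\beta_c$ (the only non-trivial case, since $\widetilde M_{LRO}(\beta)\le M_{LRO}(\beta)\le m^*(\beta)=0$ for $\beta<\beta_c$ by the preceding Proposition): the hypothesis $\widetilde M_{LRO}(\beta_c)=0$ then forces $\theta(\beta_c)=0$, hence $m^*(\beta_c)^{2}=\theta(\beta_c)=0$. To upgrade this to uniqueness of the Gibbs state at $\beta_c$ one invokes standard ferromagnetic facts: by FKG every Gibbs state at $h=0$ lies between $\langle\cdots\rangle^-_{\beta_c}$ and $\langle\cdots\rangle^+_{\beta_c}$ on increasing events; even correlation functions coincide in these two states by global spin-flip symmetry, and $m^*(\beta_c)=0$ together with Griffiths' inequalities forces every odd correlation in $\langle\cdots\rangle^+_{\beta_c}$ to vanish, whence $\langle\cdots\rangle^+_{\beta_c}=\langle\cdots\rangle^-_{\beta_c}$ and the Gibbs state is unique.

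The main obstacle is the first ingredient: making the infinite-volume random current representation rigorous --- establishing that the limiting doubled current is locally finite (so that the Burton--Keane principle genuinely applies) and that both displayed identities pass to the infinite-volume limit --- since their finite-volume versions are classical consequences of the switching lemma, and everything downstream is combinatorial bookkeeping together with the uniqueness of the infinite cluster.
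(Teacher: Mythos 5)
Your overall architecture --- infinite-volume duplicated currents, insertion tolerance plus a Burton--Keane argument for uniqueness of the infinite cluster, and the Cauchy--Schwarz step giving $\mathbf{P}_\beta[0\leftrightarrow\infty]^2\le\widetilde M_{LRO}(\beta)^2$ --- matches the paper, and that half of the argument is sound. The genuine gap is the other half: the identity
\[
m^*(\beta)^2=\mathbf{P}_\beta[0\leftrightarrow\infty]
\]
for the sum of two \emph{free}-boundary sourceless currents is asserted via a heuristic (a ghost of strength $h$ with $h\downarrow0$) and never proved, and it is not a routine limit interchange but the entire content of the theorem. First, your $\mathbf{P}_\beta$ is built solely from the free state $\langle\cdots\rangle^0_{\beta_c}$, whereas $m^*(\beta_c)$ lives in the plus state; any proof must bridge the two states at $\beta_c$, and that bridge is precisely what is a priori missing. (For $Q>4$ Potts models the analogous bridge fails --- the free/disordered phase has no long range order while the ordered phase carries positive magnetization --- so no soft argument can deliver your identity; it must exploit the switching lemma on \emph{mixed} free/plus current pairs, which your construction never does.) Second, the direction you actually need, $m^*(\beta_c)^2\le\mathbf{P}_{\beta_c}[0\leftrightarrow\infty]$, is the one that does not follow from semicontinuity or monotonicity: there is no FKG inequality for current measures, so even the monotonicity in $h$ of $\mathbf{P}_{\beta,h}[0\leftrightarrow g]$, let alone the claim that connections to the ghost surviving as $h\downarrow0$ must come from an infinite cluster at $h=0$, is unjustified.

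The paper circumvents this by taking $\mathbb{P}_\beta$ to be the law of $\widehat{\mathbf{n}_1+\mathbf{n}_2}$ with $\mathbf{n}_1\sim{\rm P}^0$ and $\mathbf{n}_2\sim{\rm P}^+$ (one free and one plus current), and by proving an \emph{inequality} rather than an identity: the switching lemma expresses $\langle\sigma_x\sigma_y\rangle^+_{\Lambda_L}-\langle\sigma_x\sigma_y\rangle^0_{\Lambda_L}$ as the probability that $\partial\mathbf{n}_2=\{x,y\}$ while $x\not\leftrightarrow y$ in $\Lambda_L$, which forces $x\leftrightarrow\delta$; an explicit one-to-many rewiring of $\mathbf{n}_2$ along a path of strictly positive couplings from $x$ to $y$ (this is where {\bf C4} enters quantitatively) converts the source constraint to $\partial\mathbf{n}_2=\emptyset$ at the cost of a factor $\Gamma_{x,y}^{-1}$, yielding $0\le\langle\sigma_x\sigma_y\rangle^+_\beta-\langle\sigma_x\sigma_y\rangle^0_\beta\le\Gamma_{x,y}^{-1}\,\mathbb{P}_\beta[x\leftrightarrow\infty]$. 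Once $\mathbb{P}_{\beta_c}[x\leftrightarrow\infty]=0$ is established (your Cauchy--Schwarz step, run for this mixed measure, only needs $\mathbb{P}_\beta[x\leftrightarrow y]\le\langle\sigma_x\sigma_y\rangle^0$), this gives $\langle\sigma_x\sigma_y\rangle^+_{\beta_c}=\langle\sigma_x\sigma_y\rangle^0_{\beta_c}$, and FKG in the plus state then gives $m^*(\beta_c)^2\le\langle\sigma_0\sigma_x\rangle^0_{\beta_c}$, which vanishes on average by hypothesis. You need to replace your unproven identity with an argument of this type; as written, the crux of the proof is missing.
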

\noindent (In which case one may add that the  Gibbs states $\langle \cdot  \rangle^{\#}$ are  continuous in  $\beta$ at  $\beta_c$, regardless of the choice of the boundary conditions, see Appendix~\ref{app:continuity}.)

\noindent{\bf Remarks:} 
\begin{enumerate}   
\item[i.] The reference to  $ \widetilde M_{LRO}(\beta)$ in \eqref{eq:cont_cond} instead of $ M_{LRO}(\beta)$ allows to 
base the  estimate on the spin-spin correlations along just one of the principal axes.  Kaufman and Onsager found such  simplification  helpful  in  calculations, by  which they showed that $\widetilde M_{LRO}(\beta_c)=0$  for the  {\it nearest neighbor} Ising model in two dimensions~\cite{KO49}.  

\item[ii.]   
Theorem~\ref{thm:continuity} carries  interesting implication   
also for the case that   the magnetization is discontinuous at $\beta_c$ (as in the above mentioned $1D$ model with the borderline $1/r^2$  interaction).  
It shows that even in such case the Ising model  does not display  coexistence at $\beta_c$ of two distinct phases: a `high temperature phase' with rapid decay of correlations and a `low temperature phase' exhibiting spontaneous magnetization, $m^*(\beta_c)>0$.   Such phase coexistence would be characteristic  of a `regular' first order phase transitions, as is found in the afore mentioned Q-state Potts models at $Q>4$.

It may be of relevance to note here that it has already been known that for the Ising model the transition is `continuous' in the sense that the magnetic susceptibility 
$\chi(\beta) := \sum_x \langle  \sigma_0 \sigma_x   \rangle_{\beta}  $, which  is finite for all $\beta < \beta_c$~(\cite{ABF87}), diverges as $\beta \nearrow \beta_c$~(by the argument of \cite{GJ74}).   That is now strengthened to the observation that $m^*(\beta_c)>0$ requires also $\lim_{\beta \nearrow \beta_c} M_{LRO} \ > \ 0$.

\item[iii.]  For the specific example of the one-dimensional Ising model with 
$J_{x,y}=1/|x-y|^2$ , for which  $m^*(\beta_c)>0$ (\cite{Tho69,ACCN88}), Theorem~\ref{thm:continuity} combined with the monotonicity of the two point function (\cite{Sch,MessMSole})
implies that for all $x\in \Z^d$ 
\be 
 \lim_{\beta \nearrow \beta_c}   \langle\sigma_0\sigma_x\rangle_{\beta}^0 \ >   \  m^*(\beta_c)^2 \ > \ 0 \,. 
\ee 
\end{enumerate} 
\medskip  

Theorem~\ref{thm:continuity} is  proven here by extending the random current representation of~\cite{Aiz82}, which is based on the  switching lemma  of~\cite{GHS70}, to  infinite domains and establishing uniqueness of the infinite cluster for the resulting system of (duplicated) random currents on $\bbZ^d$.   
This provides a useful tool for studying the implications of condition \eqref{eq:cont_cond}.

\subsection{Applications to reflection positive models} 

Among the few available tools for establishing the validity of  \eqref{eq:cont_cond} for nonsolvable models, and the only one which applies in the intermediate but important dimension $d=3$,  is the Gaussian domination bound of  \cite{FSS76,FILS}  which applies to 
reflection positive interactions (see e.g. \cite{Bis09} for a review on this crucial notion).  

For the Ising model on a torus, where due to the period boundary condition  the correlation function   $F_{L,\beta}(x,y) =  \langle \sigma_x\sigma_y\rangle_{\bbT_L,\beta} $ depends only on $x-y$,  we denote the Fourier transform by       
\be 
 \widehat F_{L,\beta}(p) \ :=\   \sum_{x\in \bbT_L}e^{i p \cdot x}F_{L,\beta}(0,x)  
 \ee 
where  $p$ ranges over $\bbT_L^\star=\left( \frac{2\pi}{L} \Z \right)^d \cap (-\pi,\pi]^d$, and  $ u\cdot v  $ denotes the  scalar product between $u$ and $v$.

\begin{proposition} [Gaussian domination bound~\cite{FSS76,FILS}]  
If the interaction $J_{x,y}$ is reflection positive, then or any $p\in\bbT_L^\star\setminus\{0\}$: 
\be 
\label{eq:89}  \widehat F_{L,\beta}(p)\le \frac{1}{2\beta E(p)} \, ,  \\[2ex]  
\ee
where 
\be
E(p)\ :=\  \sum_{x\in\bbZ^d}\left(1-e^{i p\cdot x} \right)J_{0,x} \ = 
\ 2 \sum_{x\in\bbZ^d} \sin^2\big(\tfrac{p\cdot x}{2}\big) \,  J_{0,x} \, .
\ee 
\end{proposition}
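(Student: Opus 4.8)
This is the Gaussian domination (``infrared'') bound of~\cite{FSS76,FILS}, and the plan is to reproduce their argument. Introduce a source-deformed torus partition function: for $\mathbf{h}=(h_x)_{x\in\bbT_L}\in\bbR^{\bbT_L}$ set
\[
Z_{\bbT_L}(\beta,\mathbf{h})\ :=\ \sum_{\sigma\in\{-1,1\}^{\bbT_L}}\exp\Big(-\tfrac{\beta}{2}\sum_{\{x,y\}\subset\bbT_L}J_{x,y}\,\big(\sigma_x-\sigma_y-h_x+h_y\big)^2\Big).
\]
Since $\sigma_x^2=1$, so $(\sigma_x-\sigma_y)^2=2(1-\sigma_x\sigma_y)$, expanding the square shows that $Z_{\bbT_L}(\beta,\mathbf 0)$ agrees, up to an $\mathbf h$-independent constant, with the Ising partition function on $\bbT_L$ at inverse temperature $\beta$, and that
\[
\frac{Z_{\bbT_L}(\beta,\mathbf{h})}{Z_{\bbT_L}(\beta,\mathbf{0})}
\ =\ \Big\langle\exp\Big(\beta\sum_{\{x,y\}}J_{x,y}(\sigma_x-\sigma_y)(h_x-h_y)\ -\ \tfrac{\beta}{2}\sum_{\{x,y\}}J_{x,y}(h_x-h_y)^2\Big)\Big\rangle_{\bbT_L,\beta}.
\]

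The crux --- and the only place reflection positivity enters --- is the \emph{Gaussian domination inequality}
\[
Z_{\bbT_L}(\beta,\mathbf{h})\ \le\ Z_{\bbT_L}(\beta,\mathbf{0})\qquad\text{for all }\mathbf{h}\in\bbR^{\bbT_L}.
\]
Fix a reflection $\theta$ of $\bbT_L$ across a coordinate hyperplane, splitting the sites into a left and a right part. Reflection positivity of $(J_{x,y})$ is exactly the statement that, when the quadratic form in the exponent of $Z_{\bbT_L}(\beta,\mathbf h)$ is split according to $\theta$, its left--right coupling part has the form $\sum_\alpha c_\alpha\,F_\alpha\,\Theta F_\alpha$ with $c_\alpha\ge 0$, each $F_\alpha$ a function of the left spins and of the left values of $\mathbf h$ only, and $\Theta$ the reflection operator. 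This makes $Z_{\bbT_L}(\beta,\cdot)$ obey the Cauchy--Schwarz (``chessboard'') estimate $Z_{\bbT_L}(\beta,\mathbf h)\le Z_{\bbT_L}(\beta,\mathbf h_L)^{1/2}\,Z_{\bbT_L}(\beta,\mathbf h_R)^{1/2}$, where $\mathbf h_L$ (resp.\ $\mathbf h_R$) is obtained by reflecting the left (resp.\ right) half of $\mathbf h$ onto the opposite side. Applying this at each coordinate reflection plane and iterating, the supremum of $Z_{\bbT_L}(\beta,\cdot)$ --- attained by a compactness argument --- is pushed onto a source invariant under every reflection, i.e.\ constant; and a constant source returns $Z_{\bbT_L}(\beta,\mathbf 0)$, since $Z_{\bbT_L}(\beta,\mathbf h)$ depends on $\mathbf h$ only through the differences $h_x-h_y$.

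Granting this, \eqref{eq:89} follows by a second-order expansion. Fix $p\in\bbT_L^\star\setminus\{0\}$ and apply the displayed inequality with $h_x=t\cos(p\cdot x)$, and then with $h_x=t\sin(p\cdot x)$, $t\in\bbR$: in each case $t\mapsto Z_{\bbT_L}(\beta,t\mathbf v)$ is maximized at $t=0$, so its second derivative there is $\le 0$. Expanding the first identity to order $t^2$, using that the $O(t)$ term has zero $\langle\cdot\rangle_{\bbT_L,\beta}$-expectation by the global spin-flip symmetry, and summing the two choices of $\mathbf v$, one arrives at
\[
\beta\,\Big\langle\Big|\sum_{\{x,y\}}J_{x,y}(\sigma_x-\sigma_y)\big(e^{ip\cdot x}-e^{ip\cdot y}\big)\Big|^2\Big\rangle_{\bbT_L,\beta}\ \le\ \sum_{\{x,y\}}J_{x,y}\,\big|e^{ip\cdot x}-e^{ip\cdot y}\big|^2 .
\]
Using $J_{x,y}=J_{0,y-x}$, the left-hand sum equals $E(p)\,\widehat\sigma(p)$ with $\widehat\sigma(p):=\sum_{x\in\bbT_L}e^{ip\cdot x}\sigma_x$, while the right-hand side equals $|\bbT_L|\,E(p)$; also $\langle|\widehat\sigma(p)|^2\rangle_{\bbT_L,\beta}=|\bbT_L|\,\widehat F_{L,\beta}(p)$ (by translation invariance and $F_{L,\beta}(0,x)=F_{L,\beta}(0,-x)$). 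Dividing by $|\bbT_L|\,E(p)^2$ --- legitimate since aperiodicity~\textbf{C4} forces $E(p)>0$ for $p\neq0$ --- and collecting numerical factors gives~\eqref{eq:89}.

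\textbf{Main obstacle.} Everything apart from the Gaussian domination inequality is routine bookkeeping; the content is the passage from reflection positivity of $(J_{x,y})$ to the monotonicity $Z_{\bbT_L}(\beta,\mathbf h)\le Z_{\bbT_L}(\beta,\mathbf 0)$. Two points demand care: (i) checking that the completed-square weight inherits the reflection-positive factorization across each plane --- a genuine issue for the long-range interactions of interest, for which reflection positivity of $(J_{x,y})$ is itself a nontrivial structural hypothesis that fails for many natural $J$; and (ii) ensuring that iterating the chessboard estimate genuinely reduces the extremizing source to a constant. Both are done exactly as in~\cite{FSS76,FILS}, to which we refer for details.
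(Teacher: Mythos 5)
The paper itself gives no proof of this Proposition---it is quoted from~\cite{FSS76,FILS}---and your proposal reconstructs precisely the argument of those references: source-deformed partition function, Gaussian domination $Z_{\bbT_L}(\beta,\mathbf h)\le Z_{\bbT_L}(\beta,\mathbf 0)$ via reflection positivity and the chessboard estimate, then a second-order expansion in the plane-wave source. That structure is correct, and deferring the chessboard iteration to the references is reasonable.

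There is, however, one concrete problem at the very end: your displayed inequality does not ``collect'' into \eqref{eq:89}. Writing $A:=\sum_{\{x,y\}}J_{x,y}(\sigma_x-\sigma_y)(e^{ip\cdot x}-e^{ip\cdot y})=E(p)\,\widehat\sigma(p)$ and $\sum_{\{x,y\}}J_{x,y}|e^{ip\cdot x}-e^{ip\cdot y}|^2=|\bbT_L|\,E(p)$, your bound $\beta\langle|A|^2\rangle\le |\bbT_L| E(p)$ yields $\beta E(p)^2\,|\bbT_L|\,\widehat F_{L,\beta}(p)\le |\bbT_L|\,E(p)$, i.e.\ $\widehat F_{L,\beta}(p)\le 1/(\beta E(p))$ --- a factor $2$ weaker than \eqref{eq:89}, and no honest bookkeeping recovers the missing $2$. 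This is not a slip in your computation but a normalization mismatch in the statement itself: with $E(p)=\sum_x(1-e^{ip\cdot x})J_{0,x}$ one has $E(p)=2\sum_{j}(1-\cos p_j)$ for the nearest-neighbour model, and the classical infrared bound in that case is $\widehat F_{L,\beta}(p)\le \bigl(2\beta\sum_j(1-\cos p_j)\bigr)^{-1}=1/(\beta E(p))$. Indeed \eqref{eq:89} as written would be too strong: combined with the sum rule $|\bbT_L|^{-1}\sum_{p}\widehat F_{L,\beta}(p)=\langle\sigma_0^2\rangle=1$ it would force long-range order in $d=3$ for all $\beta\gtrsim 0.126$, contradicting $\beta_c\approx 0.2217$. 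So you should state and prove the bound with constant $1/(\beta E(p))$; the discrepancy is immaterial for Corollaries~\ref{cor:main} and~\ref{example}, which only use $\int \mathrm{d}p/E(p)<\infty$, but it should be flagged rather than absorbed into ``collecting numerical factors''. A minor further point: you should say how $J$ is periodized on $\bbT_L$, since both the reflection positivity of the torus model and the identification of the cross term with $E(p)\widehat\sigma(p)$ rely on it.
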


Here $E(p)$  is the   energy function of modes of momentum $p$.    The bound  \eqref{eq:89}  
allows to prove that \eqref{eq:cont_cond}  holds 
for  reflection-positive models provided   
\be  \label{eq:Ep_condition}
\int_{[-\pi,\pi]^d}\frac{{\rm d}p}{(2\pi)^d}\frac{1}{E(p)}\  < \  \infty ~.
\ee \\ 
The relation \eqref{eq:Ep_condition} is also the condition for transience of the random-walk associated with the weights  $(J_{x,y})_{x,y\in\bbZ^d}$, which  is the Markov process defined by the transition probabilities
\be 
\bbP(X_{n+1}=y|X_n=x)=\frac{J_{x,y}}{\sum_{z\in \bbZ^d}J_{x,z}  }\quad\text{ for all }x,y\in\bbZ^d.
\ee
This yields the following conclusion (derived below in Section~\ref{sec:3}): 

\begin{corollary}\label{cor:main}
If the random-walk associated to $(J_{x,y})_{x,y\in\bbZ^d}$ is transient and the model is reflection-positive, then the magnetization of the Ising model is continuous at $\beta_c$.
\end{corollary}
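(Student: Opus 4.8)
\medskip\noindent\textbf{Proof strategy.} The plan is to deduce Corollary~\ref{cor:main} from Theorem~\ref{thm:continuity}: it suffices to verify the hypothesis \eqref{eq:cont_cond} at the critical point, i.e.\ to prove $\widetilde M_{LRO}(\beta_c)=0$ whenever the model is reflection positive and the associated random walk is transient. As the text records, transience of that walk is equivalent to the integrability condition \eqref{eq:Ep_condition}, so $1/E(\cdot)\in L^1([-\pi,\pi]^d)$; this is the only use that will be made of transience, and it is exactly the hypothesis under which a dominated-convergence argument against a Fej\'er kernel closes. It will be convenient to write $u_n(p):=|\Lambda_n|^{-1}\sum_{x\in\Lambda_n}e^{ip\cdot x}$, so that $|u_n(p)|\le 1$, $u_n(0)=1$, and $u_n(p)\to 0$ as $n\to\infty$ for every $p\neq 0$.

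First I would establish that for every $\beta<\beta_c$ and every box $\Lambda_n$,
\be
 \frac{1}{|\Lambda_n|^2}\sum_{x,y\in\Lambda_n}\langle\sigma_x\sigma_y\rangle^0_{\beta}\ \le\ \int_{(-\pi,\pi]^d}\frac{|u_n(p)|^2}{2\beta\, E(p)}\,\frac{{\rm d}p}{(2\pi)^d}\,,
\ee
i.e.\ an infrared bound for the free state with the zero Fourier mode already removed. To prove it one passes through the torus $\bbT_L$: by Griffiths' inequalities, erasing the wrap-around couplings only lowers correlations, so $\langle\sigma_x\sigma_y\rangle^0_{\Lambda_m,\beta}\le\langle\sigma_x\sigma_y\rangle_{\bbT_L,\beta}$ whenever $\Lambda_m\subset\bbT_L$, and Plancherel on $\bbT_L$ gives the identity
\be
 \frac{1}{|\Lambda_n|^2}\sum_{x,y\in\Lambda_n}\langle\sigma_x\sigma_y\rangle_{\bbT_L,\beta}\ =\ \frac{\widehat F_{L,\beta}(0)}{|\bbT_L|}\ +\ \frac{1}{|\bbT_L|}\sum_{p\neq 0}|u_n(p)|^2\,\widehat F_{L,\beta}(p)\, .
\ee
The Gaussian domination bound of the preceding Proposition bounds the last sum by $|\bbT_L|^{-1}\sum_{p\neq 0}|u_n(p)|^2(2\beta E(p))^{-1}$, which as $L\to\infty$ is a Riemann sum tending to the integral above. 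The first term is the periodic-boundary order parameter $\widehat F_{L,\beta}(0)/|\bbT_L|=\langle(\,|\bbT_L|^{-1}\sum_{x\in\bbT_L}\sigma_x)^2\rangle_{\bbT_L,\beta}$, and for $\beta<\beta_c$ I would argue it tends to $0$ as $L\to\infty$, because $\widehat F_{L,\beta}(0)=\sum_{z\in\bbT_L}\langle\sigma_0\sigma_z\rangle_{\bbT_L,\beta}$ stays bounded in $L$ — the infinite-volume susceptibility being finite below $\beta_c$ (\cite{ABF87}), and the periodic state being comparable to the infinite-volume ones. Letting first $L\to\infty$ and then $m\to\infty$ produces the displayed bound.

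Next I would pass to $\beta=\beta_c$. For a fixed finite $\Lambda$ the map $\beta\mapsto\langle\sigma_x\sigma_y\rangle^0_{\Lambda,\beta}$ is analytic and, by Griffiths' second inequality, nondecreasing; hence its increasing limit $\beta\mapsto\langle\sigma_x\sigma_y\rangle^0_{\beta}$ is nondecreasing and lower semicontinuous, so left-continuous. Fixing $\beta_0<\beta_c$ and using the displayed bound for all $\beta\in[\beta_0,\beta_c)$ — on which interval its right-hand side does not exceed its value at $\beta_0$ — left-continuity of the left-hand side at $\beta_c$ yields $|\Lambda_n|^{-2}\sum_{x,y\in\Lambda_n}\langle\sigma_x\sigma_y\rangle^0_{\beta_c}\le\int_{(-\pi,\pi]^d}|u_n(p)|^2(2\beta_0 E(p))^{-1}\,{\rm d}p/(2\pi)^d$; letting $\beta_0\nearrow\beta_c$ replaces $\beta_0$ by $\beta_c$. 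Since $|u_n|\le 1$, $u_n(p)\to 0$ for $p\neq 0$, and $1/E\in L^1$ by transience, dominated convergence drives the right-hand side to $0$ as $n\to\infty$. Therefore $\widetilde M_{LRO}(\beta_c)^2\le\inf_n|\Lambda_n|^{-2}\sum_{x,y\in\Lambda_n}\langle\sigma_x\sigma_y\rangle^0_{\beta_c}=0$, and Theorem~\ref{thm:continuity} gives $m^*(\beta_c)=0$ together with uniqueness of the Gibbs state at $\beta_c$ — that is, continuity of the magnetization at $\beta_c$.

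The hard part will be the displayed estimate below $\beta_c$, and within it the disposal of the zero Fourier mode: transporting the Gaussian domination bound from the periodic volumes to the infinite-volume free state and checking that the periodic order parameter $\widehat F_{L,\beta}(0)/|\bbT_L|$ vanishes as $L\to\infty$ for each $\beta<\beta_c$. This is where the comparison (Griffiths/GKS) inequalities between periodic, free and $+$ boundary conditions, and the finiteness of the susceptibility below $\beta_c$, have to be brought to bear. Once that input is in place, the rest — convergence of the Riemann sums, left-continuity of $\langle\sigma_x\sigma_y\rangle^0_{\beta}$ in $\beta$, and the concluding dominated-convergence limit based on \eqref{eq:Ep_condition} — is routine.
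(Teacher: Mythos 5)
Your proposal is correct and follows essentially the same route as the paper: Gaussian domination on the torus with the zero mode split off, the zero mode killed for $\beta<\beta_c$ via finiteness of the susceptibility and a Griffiths comparison with the $+$ state, monotonicity/semicontinuity in $\beta$ and in the volume to transport the bound to $\langle\sigma_x\sigma_y\rangle^0_{\beta_c}$, and then Theorem~\ref{thm:continuity}. The only cosmetic difference is that you keep the estimate in Fourier space with the Fej\'er-type kernel $|u_n(p)|^2$ and close with dominated convergence, whereas the paper passes to the position-space Green's function $G(x,y)$ and states the same fact as the vanishing of its Ces\`aro averages in \eqref{eq:63}.
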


For specific applications, let us quote from~\cite{FSS76,FILS} (c.f. also \cite{AF86}) that the following Ising models are reflection-positive (with respect to hyperplanes passing through vertices):  
\vspace{-.2cm} 
\begin{enumerate} 
\item (nearest neighbor interactions) $J_{x,y} = \delta_{\|x-y\|_1,1}$\, , 
\item (exponential decay) \hspace{1.8cm}  $J_{x,y}=\exp(-\mu \|x-y\|_1)$ for $\mu>0$\, , 
\item (power-law potentials) \hspace{1.2 cm}  $J_{x,y}=\|x-y\|_1^{-\alpha}$ for $\alpha > d$\, ,  \\ 
\end{enumerate}
where  $\|x\|_1=\sum_{i=1}^d|x_i|$ for  $x=(x_1,\dots,x_d)$.  Furthermore, Ising models whose couplings are linear combinations with positive coefficients of the couplings mentioned above  are also reflection-positive,   
and in one dimension  the existence of a phase transition requires both $\sum_{n\in \mathbb N} J_n  = \infty$  and 
$\sum_{n\in \mathbb N} n \, J_n <  \infty$, which corresponds to $\alpha \in (1, 2]$  
(\cite{Dys69,ACCN88}).\\ 

In the above examples $E(p)$ vanishing for $p \to 0$ at the rates:  $E(p)\approx |p|^2$ in cases  (1) and (2), and  $E(p) \approx |p|^{\min\{2,\alpha-d\}}$ in case (3).   Thus, verifying the condition \eqref{eq:Ep_condition}  we conclude:  

\begin{corollary}\label{example}
The magnetization of the following models is continuous as a function of $\beta$ for
\vspace{-.4cm} 
\begin{itemize}[noitemsep]
\item any reflection positive ferromagnetic Ising model in $d>2$ dimensions,
\item any such one dimensional  model whose interaction  includes long range term(s) with $1< \alpha <2$, and no other powers.
\end{itemize}
\end{corollary}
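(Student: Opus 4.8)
The plan is to deduce Corollary~\ref{example} directly from Corollary~\ref{cor:main}: for each interaction in the list it suffices to check that (i) the model is reflection positive and (ii) the associated random walk is transient, i.e.\ that \eqref{eq:Ep_condition} holds. Granting (i)--(ii), Corollary~\ref{cor:main} gives $m^*(\beta_c)=0$, and then monotonicity of $\beta\mapsto m^*(\beta)$ forces $m^*\equiv 0$ on $[0,\beta_c]$, so $m^*$ is continuous there (and right-continuous at $\beta_c$, being upper semicontinuous). Continuity at each $\beta>\beta_c$ is the only remaining point; for this one invokes the available results (Bodineau~\cite{Bod06} in the finite-range case, and the analogous monotonicity arguments otherwise), the genuinely new input being the value $\beta=\beta_c$ itself.

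For (i) I would simply quote from~\cite{FSS76,FILS}: the nearest-neighbor, the exponentially decaying, and the power-law couplings $\|x-y\|_1^{-\alpha}$ with $\alpha>d$ are reflection positive across hyperplanes through vertices, and this class is closed under positive linear combinations; in $d>2$ this is the class named in the statement, and in $d=1$ the coupling $\|x-y\|_1^{-\alpha}$ with $1<\alpha<2$ belongs to it (as $\alpha>d=1$), the clause ``no other powers'' keeping us inside the class while excluding a $1/r^2$ term.

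The substantive step is (ii), for which I write $E(p)=2\sum_x\sin^2(\tfrac{p\cdot x}{2})J_{0,x}$. Away from the origin there is nothing to do: by aperiodicity {\bf C4} the set $\{x:J_{0,x}>0\}$ generates $\Z^d$, so $E(p)=0$ would force $p\cdot x\in 2\pi\Z$ for all $x\in\Z^d$ and hence $p=0$; since $E$ is continuous, $1/E$ is bounded on $\{\delta\le|p|\le\pi\}$. Near $p=0$ one uses $1-\cos\theta\asymp\theta^2$ on $|\theta|\le\pi$. In $d>2$ this already suffices: restricting the sum to a finite, $\bbR^d$-spanning family of vectors provided by {\bf C4} gives $E(p)\gtrsim|p|^2$ for small $|p|$, so $\int_{|p|<\delta}E(p)^{-1}\,dp\lesssim\int_0^\delta r^{d-3}\,dr<\infty$, and \eqref{eq:Ep_condition} holds for every reflection-positive model in $d>2$. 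In $d=1$ one needs the heavy tail: if a term $c\,\|x\|_1^{-\alpha}$ with $1<\alpha<2$ is present, then, comparing $\sum_x\sin^2(\tfrac{px}{2})|x|^{-\alpha}$ with its contribution from $|x|\asymp 1/|p|$ (where $\sin^2(\tfrac{px}{2})\asymp 1$ and there are $\asymp 1/|p|$ such $x$, each of weight $\asymp|p|^\alpha$), one gets $E(p)\gtrsim|p|^{\alpha-1}$; since any shorter-range terms only add nonnegatively, $\int_{|p|<\delta}E(p)^{-1}\,dp\lesssim\int_0^\delta r^{-(\alpha-1)}\,dr<\infty$ because $\alpha-1<1$, and \eqref{eq:Ep_condition} holds.

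The point that calls for care --- rather than a real obstacle --- is the sharpness in $d=1$: at $\alpha=2$ the same estimate gives $E(p)\asymp|p|$, so $\int_0^\delta r^{-1}\,dr=\infty$ and \eqref{eq:Ep_condition} fails, in accordance with the known discontinuity of $m^*$ at $\beta_c$ for the $1/r^2$ model. One must therefore make sure which power-law term controls the small-$p$ behaviour of $E(p)$, which is exactly where the restriction on the admissible exponents is used; in $d>2$ there is comfortable room and the check is immediate. A minor, routine, matter is supplying or citing the continuity of $m^*$ on $(\beta_c,\infty)$ needed to upgrade the ``continuity at $\beta_c$'' of Corollary~\ref{cor:main} to the stated continuity as a function of $\beta$.
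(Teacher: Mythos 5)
Your proposal is correct and follows essentially the same route as the paper: reduce to Corollary~\ref{cor:main} by quoting reflection positivity from~\cite{FSS76,FILS} and verifying the transience condition \eqref{eq:Ep_condition} via the small-$p$ asymptotics $E(p)\gtrsim|p|^2$ in $d>2$ and $E(p)\gtrsim|p|^{\alpha-1}$ for the one-dimensional long-range case. The paper states these asymptotics without the explicit lower-bound computations you supply, but the argument is the same.
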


It may be added that while  \eqref{eq:Ep_condition}  is not satisfied in two dimensions,  Theorem~\ref{thm:continuity} is of relevance  also for  this case,  since the condition \eqref{eq:cont_cond} can be established 
for the nearest neighbor model on the square, hexagonal and triangular lattices 
 through the Fortuin-Kasteleyn random cluster representation (see  \cite{Gri06} Theorem 6.72). 
\\ 

The  planar case   shows that   transience of the associated random walk (namely \eqref{eq:Ep_condition}) is not  necessary  for the continuity of $m^*(\beta)$.   On the other hand, for one-dimensional long range models with $J_{x,y} = 1/|x-y|^\alpha$ the criterion provided by \eqref{eq:Ep_condition} is sharp, since it holds just up to the value $\alpha=2$ at which the spontaneous magnetization is known to be discontinuous at $\beta_c$.  \\

\subsection{Past results on continuity at $\beta_c$} 

The  past results on the continuity of the spontaneous magnetization at $\beta_c$ are  naturally split  into two distinct classes: {\it i.} the special low dimensional case of  $d=2$, and {\it ii.} high dimensions (as described below).  For the standard nearest neighbor model, the only dimension which has been left out is the one of seemingly most physical interest:  $d=3$.   
Thus, aside from the general principle,  for the nearest-neighbor case the novelty in our results  is mainly limited to that case  (plus certain long range models that fall between these two ranges). 
 
The earliest results have been derived for  the   nearest neighbor model in $d=2$ dimensions, for which the spontaneous magnetization  was computed by  Yang~\cite{Yan52}, using  the methods of~\cite{Ons44,K49}.  Prior to that, Kaufman and Onsager~\cite{KO49} showed that $\widetilde M_{LRO}(\beta_c)=0$.   More recently, continuity of $m^*(\beta)$   at $\beta_c$ was given  a short proof in~\cite{Wer09} and a proof using discrete holomorphicity and the Russo-Seymour-Welsh theory was exposed in \cite{DHN10}. 

For high dimensions, the continuity of spontaneous magnetization in the nearest neighbor model was 
established in~\cite{AF86} for   $d\ge 4$ (formally $d > 3 \frac{1}{2}$)  through reflection positivity bounds combined with differential inequalities.   The method used there yields also information on the critical exponent $\delta$ with which $m^*(\beta) \approx |\beta_c-\beta|_{-}^{1/\delta}$, and  related results  for reflection positive long range interactions due to which the effective dimension (as expressed through the `bubble diagram') is lowered. \\   

Results which do not require reflection positivity were derived through the  general method of `lace expansion'~\cite{Sl06} whose adaptation to Ising systems' random currents was accomplished by A. Sakai~\cite{Sak07}.   
The lace expansion is restricted to models above the upper critical dimension, $d> d_c$, which in the presence of long range interaction (with $\alpha > d$) is lowered to $d_c = \min \{ 4, 2(\alpha -d)\}$  \cite{HHS08} (see also \cite{CS12}  for tight estimates on the two-point function).  
For optimal dimensional dependence of the results,  the lace expansion requires also   a sufficiently spread-out short range interaction (possibly for only  technical reasons).   However this method allows to deduce mean field behavior of $m^*(\beta)$ in high  dimensions for a collection of  non-solvable, and not necessarily reflection positive, models.   \\

\section{The Random Current representation and its percolation properties}

We shall use the following notation when discussing dependent percolation on the graph  $\Z^d$.    
\paragraph{Notation.}  For any subset $G \subset \Z^d$, we let  $\calP_2(G)=\{\{x,y\}:x,y\in G\}$. 
For a configuration of ``bond variables'' $\omega\in\{0,1\}^{\calP_2(\bbZ^d)}$, an edge $\{x,y\}$ for which $\omega_{x,y}=1$ is said to be {\em open}, and otherwise it is {\it closed}.  Two vertices $x$ and $y$ are said to be {\em connected} if there exist $x=x_0,\dots,x_m=y$ such that 
$\omega_{x_i,x_{i+1}}=1$ for every $0\le i<m$.   The statement that $x$ and $y$ are connected is denoted by $x\lr{\omega} y$ (and when $\omega$ is deemed clear from the context, we drop it from the notation).    The sites of $\bbZ^d$ are partitioned into maximal connected components of $\omega$, which are called {\em clusters}. 

We will often encounter also integer valued bond functions, i.e. elements   $\omega \in \{0,1,2,...\}^{\calP_2(G)} \ =: \ \Omega_G$.  The associated percolation would refer to  the projection $\widehat\cdot: \Omega_G \longrightarrow\{0,1\}^{\calP_2(G)}$ defined by 
$$\widehat{n}_{x,y}=\begin{cases}1&\text{ if ${n}_{x,y}>0$,}\\ 0&\text{ otherwise.}\end{cases}
$$

The `lattice shifts', by vectors $x\in \bbZ^d$, of configurations $\omega\in\{0,1\}^{\calP_2(\bbZ^d)}$, or $\omega \in \Omega_{\bbZ^d}$, are the  mappings $\tau_x$  defined by $\tau_x(\omega)_{a,b}=\omega_{a+x,b+x}$ 
for  all  $a,b\in\bbZ^d$.\\ 

The indicator function on a configuration space $\Omega$ corresponding to a condition $E$ will be denoted by $\indf{E}\equiv \indf{E}(\omega)$.  The argument $(\omega)$ will be omitted when its deemed to be clear within the context.   \\

\subsection{The random current representation}

\begin{definition}
A {\em current} $\n$ on $G\subset \bbZ^d$ (also called a current configuration) is a function from $\calP_2(G)$ to $\{0,1,2,...\}$.    A {\em source} of $\n=(\n_{x,y}:\{x,y\}\in \calP_2(G))$ is a vertex $x$ for which $\sum_{y\in G}{\n}_{x,y}$ is odd. The set of sources of $\n$ is denoted by $\partial\n$, and the  collection of current configurations on $G$ is  $\Omega_G$. 

\end{definition}

\subsubsection*{Random current representation for free boundary conditions}     

The partition function of a finite graph $G$ is: 
\begin{equation}\label{eq:4}Z^0(G,\beta)=\sum_{\sigma\in\{-1,1\}^G}\prod_{\{x,y\}\subset G}e^{\beta J_{x,y}\sigma_x\sigma_y}.\end{equation}
Expanding $e^{\beta J_{x,y}\sigma_x\sigma_y}$ for each $\{x,y\}$ into
$$e^{\beta J_{x,y}\sigma_x\sigma_y}=\sum_{{\n}_{x,y}=0}^\infty\frac{(\sigma_x\sigma_y)^{{\n}_{x,y}} (\beta J_{x,y})^{{\n}_{x,y}}}{{\n}_{x,y}!}$$
and substituting this relation in \eqref{eq:4}, one gets
\begin{equation*}
Z^0(G,\beta)=\sum_{\n\in\Omega_G}w_\beta(\n)\sum_{\sigma\in\{-1,1\}^G}\prod_{x\in G}\sigma_x^{\sum_{y\in G} {\n}_{x,y}},
\end{equation*}
where
$$w_\beta(\n):=\prod_{\{x,y\}\subset G}\frac{(\beta J_{x,y})^{{\n}_{x,y}}}{{\n}_{x,y}!}.$$
Now,
$$\sum_{\sigma\in\{-1,1\}^G}\prod_{x\in G}\sigma_x^{\sum_{y\in G} {\n}_{x,y}}=\begin{cases}0 &\text{ if $\sum_{y\in G} {\n}_{x,y}$ is odd for some $x\in G$,}\\
2^{|G|}&\text{ otherwise.}\end{cases}$$
Above, $|G|$ denotes the number of sites of $G$. 
Thus, the definition of a current's source  enables one to write
\be 
Z^0(G,\beta)=2^{|G|}\sum_{\n\in\Omega_G:\,\partial\n=\emptyset}w_\beta(\n)
~. 
\ee

Similar expansions for the correlation functions  involve  currents  with sources.   For instance,
\begin{equation*}\sum_{\sigma\in\{-1,1\}^G}\sigma_x\sigma_ye^{-\beta H^0_G(\sigma)}=2^{|G|}\sum_{\n\in\Omega_G:\,\partial\n=\{x,y\}}w_\beta(\n),\end{equation*}
which gives
\begin{equation}\label{eq:7}\langle\sigma_x\sigma_y\rangle_{G,\beta}^0=\frac{\displaystyle\sum_{\n\in\Omega_G:\,\partial\n=\{x,y\}}w_\beta(\n)}{\displaystyle\sum_{\n\in\Omega_G:\,\partial\n=\emptyset}w_\beta(\n)}.\end{equation}

\subsubsection*{Random current representation for $+$ boundary conditions} 
For a finite subset $G\subset \Z^d$, the equilibrium state at $+$ boundary conditions is obtained by freezing all spins in the complementary set to the value $+1$.   This may be conveniently represented by adding  an additional vertex $\delta\notin \bbZ^d$, to which we refer as the   {\em ghost spin} site\footnote{The notion is related to Griffith's  {\em ghost spin}, which was added by R.B. Griffiths  \cite{Grif_ghost}   as a tool for the extension of correlation inequalities to states under  an external field.},   
and setting the  coupling between it and sites $x\in G$   to  $J_{x,\delta}=J_{x,\delta}(G):=\sum_{y\notin G}J_{x,y}$.    

For notational convenience we adapt the convention that the ghost site is not to be listed in the configurations source set $\partial \n$ regardless of the  parity of the flux into $\delta$ (which can be determined from the parity of   $|\partial \n|$).\\

For $+$ boundary conditions, a development similar to the  above yields: 
\begin{equation}\label{eq:8}\langle\sigma_x\sigma_y\rangle_{G,\beta}^+=\frac{\displaystyle\sum_{\n\in\Omega_{G\cup\{\delta\}}:\,\partial\n=\{x,y\}}w_\beta(\n)}{\displaystyle\sum_{\n\in\Omega_{G\cup\{\delta\}}:\,\partial\n=\emptyset}w_\beta(\n)}\, . 
\end{equation}
Observe that \eqref{eq:7} differs from \eqref{eq:8} in that the summation is over all currents on $G\cup\{\delta\}$ instead of $G$. Also note that $J_{x,\delta}$ depends on $G$. \\ 

\subsubsection*{Switching lemma} 
As mentioned in the introduction, the  {\it random current} perspective on the Ising model's phase transition  is driven by the observation that the onset of long range order coincides with a percolation transition in a dual system of currents.   This point of  view,  as an intuitive guide to diagrammatic bounds  which under certain conditions provide `hard information' on the critical model's scaling limits,  
was developed in \cite{Aiz82} and a number of subsequent works.     
Among the first tools which facilitate cancellations in this representation is the following graph-theoretic switching lemma, which was originally introduced in \cite{GHS70} and applied there for the  Griffiths-Hurst-Sherman (GHS) inequality

\begin{lemma}[Switching lemma\footnote{In allowing $G$ to be a strict subset of $ H$, Lemma~\ref{switching} forms a minor extension of the statement found in \cite{GHS70}.   
An  allusion to it, and its other applications,  was made in the explanation of Lemma 6.3 in \cite{AG83}, where this extension was  applied.  
The difference in the proof  is rather trivial.
}]  \label{switching}
For any nested pair of finite sets $G\subset H$, pair of sites $x,y \in  G$ and   $A\subset H$, and a function
 $F:\Omega_H\rightarrow \bbR$: 
\begin{align*}\sum_{\substack{\n_1\in\Omega_G:\,\partial \n_1=\{x,y\}\\ \n_2\in\Omega_H:\,\partial \n_2=A }}&F(\n_1+\n_2)w_\beta(\n_1)w_\beta(\n_2)\\
&=\sum_{\substack{\n_1\in\Omega_G:\,\partial \n_1=\emptyset\\ \n_2\in\Omega_H:\,\partial \n_2=A\Delta\{x,y\}}}F(\n_1+\n_2)w_\beta(\n_1)w_\beta(\n_2)\indf{ x\lr{\widehat{\n_1+\n_2}}y\mathrm{\ in\ }G }.\end{align*}
with  $A\Delta B$ denoting the symmetric difference $(A\setminus B)\cup(B\setminus A)$ between $A$ and $B$.
\end{lemma}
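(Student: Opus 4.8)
The plan is to collapse each side of the claimed identity into a single sum over the combined current $\m:=\n_1+\n_2\in\Omega_H$, and then to verify a purely combinatorial identity at each fixed $\m$. I would begin by recording the weight factorization: for any splitting $\n_1+\n_2=\m$,
\[
w_\beta(\n_1)\,w_\beta(\n_2)\;=\;w_\beta(\m)\prod_{e\in\calP_2(H)}\binom{\m_e}{(\n_1)_e}\,,
\]
which is immediate from $w_\beta(\n)=\prod_e(\beta J_e)^{\n_e}/\n_e!$. Next, observe that on the left-hand side $\partial\m=\partial\n_1\,\Delta\,\partial\n_2=\{x,y\}\,\Delta\,A$, while on the right-hand side $\partial\m=\emptyset\,\Delta\,(A\,\Delta\,\{x,y\})=\{x,y\}\,\Delta\,A$ as well, so in both sums $\m$ ranges over $\{\m\in\Omega_H:\partial\m=A\,\Delta\,\{x,y\}\}$; moreover the event $\{x\lr{\widehat{\n_1+\n_2}}y\ \text{in}\ G\}$ depends on $\m$ only. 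Summing out $\n_1$ at fixed $\m$ therefore turns both sides into expressions of the shape $\sum_{\m:\partial\m=A\Delta\{x,y\}}F(\m)\,w_\beta(\m)\,N_\bullet(\m)$, where $N_S(\m):=\sum_{\n_1\in\Omega_G,\ \n_1\le\m,\ \partial\n_1=S}\prod_e\binom{\m_e}{(\n_1)_e}$. One must check that no constraint on $\n_2$ is lost in this reduction: if $\n_1\le\m$ and $\partial\n_1\in\{\emptyset,\{x,y\}\}$, then $\n_2:=\m-\n_1$ automatically lies in $\Omega_H$ and $\partial\n_2=\partial\m\,\Delta\,\partial\n_1$ equals $A$ in both cases. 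The lemma is thereby reduced to the pointwise statement
\[
N_{\{x,y\}}(\m)\;=\;N_\emptyset(\m)\,\indf{x\lr{\widehat\m}y\ \text{in}\ G}
\]
for every $\m$ with $\partial\m=A\,\Delta\,\{x,y\}$.

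For this pointwise identity I would pass to the multigraph picture: regard $\m$ as the multigraph $\calM$ on $\bbZ^d$ in which each edge $e$ is replaced by $\m_e$ parallel copies, and let $\calM_G$ be its restriction to the edges of $\calP_2(G)$. Since $\prod_e\binom{\m_e}{(\n_1)_e}$ counts the sub-multigraphs of $\calM$ whose edge-multiplicity function equals $\n_1$, and the odd-degree-vertex set of such a sub-multigraph is exactly $\partial\n_1$, the number $N_S(\m)$ equals the number of sub-multigraphs of $\calM_G$ with odd-degree set $S$. If $x$ and $y$ are not connected in $\widehat\m$ within $G$, then no subgraph of $\calM_G$ can have $\{x,y\}$ as its odd-degree set — within each connected component the number of odd-degree vertices is even — so $N_{\{x,y\}}(\m)=0$ and the identity holds trivially. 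If $x\lr{\widehat\m}y$ within $G$, I would fix once and for all, say lexicographically, a self-avoiding path $\gamma$ of edge-copies from $x$ to $y$ lying inside $\calM_G$; then $\calN\mapsto\calN\,\Delta\,\gamma$ (toggling the membership of the copies on $\gamma$) is an involution on the sub-multigraphs of $\calM_G$ which replaces the odd-degree set by its symmetric difference with $\{x,y\}$, since interior vertices of $\gamma$ have two incident path-copies toggled and keep their parity while the two endpoints have only one toggled. Hence it restricts to a bijection between the sub-multigraphs with odd-degree set $\{x,y\}$ and those with odd-degree set $\emptyset$, giving $N_{\{x,y\}}(\m)=N_\emptyset(\m)$. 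The degenerate case $x=y$, if admitted, is trivial, both sides collapsing to the $\partial\n_1=\emptyset$ sum.

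I expect the one genuinely delicate point — the \emph{main obstacle} — to be this pointwise combinatorial step: getting the multiplicity bookkeeping right so that the binomial factors truly enumerate sub-multigraphs, checking that the complementary-current constraints are automatic after the reduction, and pinning down a canonical $\gamma$ so that $\calN\mapsto\calN\,\Delta\,\gamma$ is an honest involution. The remaining steps are formal manipulations. In particular the extension to strict inclusions $G\subsetneq H$ — the minor strengthening over \cite{GHS70} — costs nothing: the set $G$ enters only through the restriction $\calM_G$ and through the requirement that $\gamma$ lie within $G$, and the involution argument is indifferent to whether $G=H$.
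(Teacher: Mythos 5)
Your proposal is correct and follows essentially the same route as the paper: the same change of variables to $\m=\n_1+\n_2$ with the binomial weight factorization, the same reduction to a pointwise combinatorial identity at fixed $\m$, and the same involution $\calN\mapsto\calN\,\Delta\,\calK$ on sub-multigraphs (the paper takes $\calK$ to be any subgraph of $\calM$ restricted to $G$ with odd-degree set $\{x,y\}$, whereas you specialize to a self-avoiding path, which changes nothing). The only cosmetic difference is that your justification of the disconnected case, via the parity of odd-degree vertices within each component, is slightly more explicit than the paper's.
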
  
 
 The essential graph-theoretic argument is given in \cite{GHS70}, and in the random current notation which is employed below in \cite{Aiz82}.  Since the proof  provides  an introduction to the notation  let us repeat it here.  
\begin{proof}
In the argument, a current on  the subgraph corresponding to $G$ is also viewed as a current on $H$ which vanishes on pairs $\{ x,y \}$ not contained in  $G$.    The switching is performed within  collections of pairs of currents $\{ \n_1,\, \n_2 \}$ of a specified value for the sum $\m:=\n_1+\n_2$.   It is therefore convenient to take as the summation variables the current pairs $\m$  and $\n=\n_1 \le \m$  (with $\n \le \m$ defined as the natural  {\it partial order} relation).    One obtains
\begin{align*}\sum_{\substack{\n_1\in\Omega_G:\,\partial \n_1=\{x,y\}\\ \n_2\in\Omega_H:\,\partial \n_2=A }}&F(\n_1+\n_2)\, w_\beta(\n_1) \,  w_\beta(\n_2)\\
&=\sum_{\substack{\m\in\Omega_H:\,\partial \m=A\Delta\{x,y\}}} F(\m) \,  w_\beta(\m)\sum_{\substack{\n\in\Omega_G:\,\partial\n=\{x,y\}\\ \n\le\m}}\binom{\m}{\n},\end{align*}
and
\begin{align*}
\sum_{\substack{\n_1\in\Omega_G:\,\partial \n_1=\emptyset\\ \n_2\in\Omega_H:\,\partial \n_2=A\Delta\{x,y\} }}&F(\n_1+\n_2) \,  w_\beta(\n_1) \,  w_\beta(\n_2)\,  \indf { x\lr{\widehat{\n_1+\n_2}}y\text{ in }G }\\
&=\sum_{\substack{\m\in\Omega_H:\,\partial \m=A\Delta\{x,y\}}}F(\m) \,  w_\beta(\m)  \,  \indf{x\lr{\widehat\m}y\text{ in }G }\sum_{\substack{\n\in\Omega_G:\,\partial\n=\emptyset\\
\n\le\m}}\binom{\m}{\n},\end{align*}
where 
$\binom{\m}{\n}=\prod_{\{x,y\}\subset G}\binom{\m_{x,y}}{{\n}_{x,y}}$
 and where we used the fact that
$$w_\beta(\n_1)w_\beta(\n_2)=\prod_{\{x,y\}\subset G\cup\{\delta\}}\left[\frac{(\beta J_{x,y})^{{\n}_{x,y}}}{{\n}_{x,y}!}\right]\,  \left[\frac{(\beta J_{x,y})^{\m_{x,y}}}{\m_{x,y}!}\right]\  =\  w_\beta(\m)\binom{\m}{\n}.$$
The claim follows if the relation below is proved for every current $\m\in\Omega_H$:
\begin{equation}\label{eq:10}
\sum_{\substack{\n\in\Omega_G:\,\partial\n=\{x,y\}\\ \n\le\m}}\binom{\m}{\n}\ = \ \indf{ x\lr{\widehat\m}y\text{ in }G }\sum_{\substack{\n\in\Omega_G:\,\partial\n=\emptyset\\ \n\le\m}}\binom{\m}{\n}.
\end{equation}
First, assume that $x$ and $y$ are not connected in $G$ by $\m$. The right-hand side is trivially zero. Moreover, there is no current $\n$ on $G$ which is smaller than $\m$ and which connects $x$ to $y$. The left-hand side is thus 0 and \eqref{eq:10} is proved in this case.

Let us now assume that $x$ and $y$ are connected in $G$ by $\m$. Associate to $\m$ the graph $\calM$ with vertex set $G$, and $\m_{a,b}$ edges between $a$ and $b$. For a subgraph $\calN$ of $\calM$, let $\partial\calN$ be the set of vertices belonging to an odd number of edges. Since $x$ and $y$ are connected in $G$ by $\m$, there exists a subgraph $\calK$ of $\calM$ with $\partial\calK=\{x,y\}$.

The involution $\calN\mapsto \calN\Delta\calK$ provides a bijection between the set of subgraphs of $\calM$ with $\partial\calN=\emptyset$, and the set of subgraphs of $\calM$ with $\partial\calN=\{x,y\}$. Therefore, these two sets have the same cardinality. Since the summations in 
$$\sum_{\substack{\n\in\Omega_G:\,\partial\n=\emptyset\\ \n\le\m}}\binom{\m}{\n}\quad\text{and}\quad\sum_{\substack{n\in\Omega_G:\,\partial\n=\{x,y\}\\ \n\le\m}}\binom{\m}{\n}$$
are over currents $\n$ in $G$, these sums correspond to the cardinality of the two sets mentioned above. In particular, they are equal and the statement follows.
\end{proof}
\subsection{Infinite-volume random current representation}

Next, we formulate the infinite volume limit of the random current representation.   This allows a more effective use of the asymptotic translation invariance, and enables us to  deploy the uniqueness of the infinite cluster argument, which can be established in this   context.    

Let ${\rm P}^0_{G,\beta}$ be the law on currents on $G$ defined by
\be  {\rm P}^0_{G,\beta}[\n]:=\frac{w_\beta(\n) \, \,  \indf{\partial\n=\emptyset} }{\displaystyle\sum_{\m\in\Omega_G:\,\partial\m=\emptyset}w_\beta(\m)}\quad, \qquad  
\forall \n\in\Omega_G~. 
\ee 
This  induces a measure on $\Omega_G:=\{0,1,2,...\}^{\calP_2(G)}$ that we  denote by $\widehat {\rm P}^0_{G,\beta}$. One may also define a law on currents on $G\cup\{\delta\}$ which induces a measure on 
$\Omega_{G\cup\{\delta\}} :=\{0,1,2,...\}^{\calP_2(G\cup\{\delta\})}$
denoted by $\widehat {\rm P}^+_{G,\beta}$. 

Let $\Lambda_L=[-L,L]^d$ be the box of size $L$.

\begin{theorem}\label{def:current infinite}
Let $\beta>0$. There exist two laws $\widehat{\rm P}^+_\beta$ and $\widehat{\rm P}^0_\beta$ on 
$\Omega_{\bbZ^d}$  such that
\begin{itemize}
\item[\rm \bf R1] (Convergence)  For any event $\calA$ depending on finitely many edges, 
$$\lim_{L\rightarrow \infty}\widehat{\rm P}^+_{\Lambda_L,\beta}[\calA]=\widehat{\rm P}^+_\beta[\calA]\quad\text{and}\quad \lim_{L\rightarrow \infty}\widehat{\rm P}^0_{\Lambda_L,\beta}[\calA]=\widehat{\rm P}^0_\beta[\calA].$$
\item[\rm \bf R2] (Invariance under translations) $\widehat{\rm P}^+_\beta$ and $\widehat{\rm P}^0_\beta$ are invariant under the shifts $\tau_x$, $x\in \bbZ^d$.
\item[\rm \bf R3] (Ergodicity) $\widehat{\rm P}^+_\beta$ and $\widehat{\rm P}^0_\beta$ are ergodic with respect to the group of shifts $(\tau_x)_{x\in\bbZ^d}$.
\end{itemize}\end{theorem}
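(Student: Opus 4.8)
The plan is to obtain $\widehat{\rm P}^0_\beta$ and $\widehat{\rm P}^+_\beta$ as weak limits of the finite-volume current measures $\widehat{\rm P}^0_{\Lambda_L,\beta}$ and $\widehat{\rm P}^+_{\Lambda_L,\beta}$, and then to verify translation invariance and ergodicity of the limit. The first step is existence of a subsequential limit. Since the configuration space $\Omega_{\bbZ^d}=\{0,1,2,\dots\}^{\calP_2(\bbZ^d)}$ is not compact, I would first establish tightness: for each edge $\{x,y\}$ the marginal distribution of $\n_{x,y}$ under $\widehat{\rm P}^0_{\Lambda_L,\beta}$ (or $\widehat{\rm P}^+$) must be tight uniformly in $L$. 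The natural way to see this is to dominate $\n_{x,y}$ stochastically: summing over all other bond values in the weight $w_\beta(\n)=\prod \frac{(\beta J_{x,y})^{\n_{x,y}}}{\n_{x,y}!}$, the source constraint $\partial\n=\emptyset$ only couples parities, so one expects $\n_{x,y}$ to be dominated by (twice) a Poisson variable of parameter $\beta J_{x,y}$, or at worst by a geometric-type tail with parameters uniform in $L$. With tightness in hand, a diagonal argument over the countable edge set produces a subsequential limit along which every cylinder event converges; this already gives a candidate measure satisfying a weak form of \textbf{R1}.

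To upgrade the subsequential convergence in \textbf{R1} to genuine convergence, I would invoke monotonicity in $L$ of the relevant probabilities. The standard tool is the switching lemma together with Griffiths/GKS-type monotonicity: for the sourceless measure the probability of a decreasing or increasing cylinder event should be monotone in the volume, so the full limit exists, not merely a subsequential one. Concretely, for the free boundary condition one compares $\Lambda_L \subset \Lambda_{L'}$ by writing an event on finitely many edges in terms of partition-function ratios and using the fact (already implicit in the derivation of \eqref{eq:7}) that adding bonds outside a region only reweights sourceless currents consistently; the switching lemma, applied as in Lemma~\ref{switching}, converts differences of such ratios into manifestly signed connectivity expressions. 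For the $+$ boundary condition the same works after tracking the ghost-site couplings $J_{x,\delta}(G)=\sum_{y\notin G}J_{x,y}$, which are monotone decreasing as $G$ grows — this is the point where the two boundary conditions are handled in parallel but with slightly different bookkeeping.

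For \textbf{R2}, translation invariance is automatic once \textbf{R1} is known along the boxes $\Lambda_L=[-L,L]^d$: given a cylinder event $\calA$ and a shift $\tau_x$, for $L$ large the boxes $\Lambda_L$ and $\Lambda_L+x$ both contain the support of $\calA$ and $\tau_x\calA$, and the finite-volume measures on $\Lambda_L$ and $\Lambda_L + x$ agree by the translation invariance of the couplings (condition \textbf{C1}); taking $L\to\infty$ and using that the symmetric difference of $\Lambda_L$ and $\Lambda_L+x$ has density zero gives $\widehat{\rm P}_\beta[\calA]=\widehat{\rm P}_\beta[\tau_x\calA]$. For \textbf{R3}, ergodicity, I would verify mixing, which is stronger and implies ergodicity: for two cylinder events $\calA,\calB$ one shows $\widehat{\rm P}_\beta[\calA\cap\tau_x\calB]\to\widehat{\rm P}_\beta[\calA]\,\widehat{\rm P}_\beta[\calB]$ as $|x|\to\infty$. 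In finite volume this reduces to a correlation-decay statement for current events supported on far-apart edge sets; the switching lemma rewrites the joint weight as a sum over merged currents, and the key estimate is that the extra connectivity indicator forces a current path between the two supports, whose weight is controlled by the two-point function $\langle \sigma_a\sigma_b\rangle^0_\beta$ — but here one must be careful, since $\langle\sigma_a\sigma_b\rangle^0_\beta$ need \emph{not} tend to $0$ (indeed the whole point of the paper concerns the regime $M_{LRO}(\beta)>0$). The robust route, and the one I expect to use, is instead to exploit the fact that $\widehat{\rm P}^0_\beta$ and $\widehat{\rm P}^+_\beta$ are tail-trivial: any tail event has probability $0$ or $1$ by a direct argument on the partition-function representation (conditioning on bonds outside $\Lambda_L$ changes the law on $\Lambda_L$ only through parities and through the couplings to the complement, both of which wash out in the limit), and translation-invariant events are tail events on $\Z^d$ by amenability of $\Z^d$.

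The main obstacle, I anticipate, is the ergodicity claim \textbf{R3}, specifically establishing tail triviality (equivalently, a mixing estimate) uniformly without assuming decay of correlations — since the theorem must hold at and below $\beta_c$ where correlations do not decay. The resolution is that the current measure's dependence on distant bonds is genuinely weaker than the spin measure's: even when spins are strongly correlated, the sourceless current on a finite region, conditioned on the exterior, depends on the exterior only through the induced effective field at the boundary, and the contribution of that conditioning to any fixed cylinder event vanishes as the region grows — this is where the noncompactness/tightness control from the first step is reused, to ensure no mass escapes to infinity along the boundary bonds. Packaging this cleanly, probably via the backbone/current-path representation and a careful application of Lemma~\ref{switching} with $G$ a proper subset of $H$, is the technical heart of the proof.
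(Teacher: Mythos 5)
Your outline correctly locates the two pressure points (existence of the limit, and ergodicity in the absence of correlation decay), but both of your proposed resolutions have gaps, and you miss the single device the paper uses to dispose of both at once: conditioning on the parity variables ${\bf r}_{x,y}=(-1)^{\n_{x,y}}$. Given the parities, the sourceless current is an explicit product of parity-conditioned Poisson$(\beta J_{x,y})$ variables, independent of $L$, so the whole theorem reduces to convergence and mixing of the law of the parities. The probability of the basic parity event $\calC_E=\{{\bf r}_{x,y}=1\ \forall\{x,y\}\in E\}$ is computed exactly as a ratio of partition functions, namely $\widehat{\rm P}^{\#}_{\Lambda_L,\beta}[\calC_E]=\langle e^{-\beta K_E}\rangle^{\#}_{\Lambda_L,\beta}\prod_{x,y\in E}\cosh(\beta J_{x,y})$ with $K_E=\sum_{x,y\in E}J_{x,y}\sigma_x\sigma_y$. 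Hence \textbf{R1} follows directly from the known convergence of the finite-volume Gibbs states; no tightness, subsequential extraction, or monotonicity of current events in the volume is needed. That is fortunate, because your proposed monotonicity step is unsupported: the paper explicitly notes that FKG is \emph{not} available for random currents, so there is no a priori monotonicity in $L$ of increasing or decreasing cylinder probabilities. (Your \textbf{R2} argument matches the paper's.)

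For \textbf{R3} the same identity reduces ergodicity to $\langle e^{-\beta K_E}e^{-\beta K_{x+F}}\rangle^{\#}_\beta\to\langle e^{-\beta K_E}\rangle^{\#}_\beta\,\langle e^{-\beta K_F}\rangle^{\#}_\beta$, i.e.\ to mixing of the spin Gibbs state \emph{restricted to even observables}. This is exactly the point you flag as the ``technical heart'' but do not supply. The resolution is not tail triviality of the current measure: your ``washing out'' claim is precisely what must be proved, and your assertion that translation-invariant events are tail events is false (the correct implication runs tail triviality $\Rightarrow$ mixing $\Rightarrow$ ergodicity, not through an identification of the two $\sigma$-algebras). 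The actual mechanism is that the observables $e^{-\beta K_E}$ are even under the global spin flip, and the states $\langle\cdots\rangle^{+}_\beta$ and $\langle\cdots\rangle^{0}_\beta$ are mixing on the even $\sigma$-algebra even in the phase-coexistence regime; this even-sector mixing is proved in Appendix A by sandwiching $\langle\sigma_A\mid e^{-\beta\widetilde H\circ\tau_x}\rangle^{\#}_\beta$ between $\langle\sigma_A\rangle^{\#}_{\Lambda_{\|x\|_1/2},\beta}$ and $\langle\sigma_A\rangle^{\#}_\beta$ via Griffiths' inequalities. Without the reduction from current cylinder events to even spin observables, neither your \textbf{R1} nor your \textbf{R3} argument closes.
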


\begin{proof}
Except for a minor difference in the very last step  the proof is identical for the $(+)$ and the free $(f)$ boundary  conditions.   Let us therefore use the symbol $\#$ as a marker for either of the two.    
 
\paragraph{Proof of R1} (Convergence) 

To prove convergence of the finite volume probability measures, let us first note that the distribution of the random currents simplifies into a product measure when conditioned on the parity variables ${\bf r}(\omega) = ({\bf r}_{x,y})_{x,y \subset G}$, with:  
\be 
{\bf r}_{x,y} (\omega)\ := \  (-1)^{{\bf n}_{x,y} (\omega)}  \, .
\ee  
The conditional distribution of $\n$, given ${\bf r}(\omega)$, is simply the product measure of independent Poisson processes of mean values $\beta J_{x,y}$ conditioned on the corresponding parity.   Thus, for a proof of convergence it suffices to establish convergence of the law of the parity variables ${\bf r}(\omega)$.   

For a set of bonds (i.e. graph edges) $E\subset\calP_2(\bbZ^d)$, define the events 
\begin{eqnarray}  
\calC_E  & =&  \left \{ \ \omega \ : \  {\bf r}_{x,y}(\omega)  = 1  \quad \forall \{x,y\} \in E \  \right \} \, ,
\\
\calC^{(0)}_E  & =&  \left \{ \ \omega \ : \  {\bf n}_{x,y}(\omega)  = 0  \quad \forall \{x,y\} \in E \  \right \}\,  . \notag  
\end{eqnarray} 
Let us prove that for any finite subset $E$ of edges of $\bbZ^d$, $\widehat{\rm P}^\#_{\Lambda_L,\beta}[\calC_E]$ converges as $L$ tends to infinity. 

To facilitate a unified treatment of the two boundary conditions we denote 
\be 
\Omega^\#_L \ = \ 
\begin{cases}  
\Omega_{\Lambda_L  }& \mbox{for $\# = 0$} \, ,  \\[2ex]   
\Omega_{\Lambda_L\cup\{\delta\} }& \mbox{for $\# = +$} \, . 
\end{cases} \, 
\ee 

For $L$ large enough (so that $\Lambda_L \supset E$) we have: 
\begin{eqnarray} \label{eq:deletion}
  \widehat{\rm P}^\#_{\Lambda_L,\beta}[\calC_E] \  & = & \frac{\displaystyle\sum_{\n\in\Omega^\#_L:\,\partial \n=\emptyset}w_\beta(\n)\, \indf{\calC_E}}{\displaystyle\sum_{\n\in\Omega^\#_L:\,\partial \n=\emptyset}w_\beta(\n) }   \notag \\ 
 & = & 
  \frac{\displaystyle\sum_{\n\in\Omega^\#_L:\,\partial \n=\emptyset}w_\beta(\n)\, \indf{\calC^{(0)}_E}}{\displaystyle\sum_{\n\in\Omega^\#_L:\,\partial \n=\emptyset}w_\beta(\n) }  \,  \prod_{x,y\in E} \cosh (\beta J_{x,y}) 
  \notag  \\[2ex]  
  & = &
     \frac{ Z^\#(\Lambda_L \setminus  E, \beta) } { Z^\#(\Lambda_L,\beta)}  \,  \prod_{x,y\in E} \cosh (\beta J_{x,y}) \, 
  \end{eqnarray}
Above, $\Lambda_L\setminus E$ designates the graph obtained by removing the edges of $E$ but keeping all the vertices of $\Lambda_L$. 
The above ratio can be expressed in terms of an expectation value of a finite term: 
\be  \label{eq:ProbE}
\widehat{\rm P}^\#_{\Lambda_L,\beta}[\calC_E] \ 
= \      \big \langle e^{  -\beta K_E   } \big \rangle^\#_{\Lambda_L, \beta} \prod_{x,y\in E} \cosh (\beta J_{x,y})\ee 
with the finite volume collection of energy terms 
\be \label{eq:K} 
K_E(\omega) := \sum_{x,y\in E}    J_{x,y} \sigma_x \sigma_y \, . 
\ee
The convergence of the above expression  follows now directly from the convergence of correlation functions as $L$ tends to infinity.

The events $\calC_E$ with $E$ ranging over finite sets of edges span (by inclusion-exclusion) the algebra of events expressible in terms of finite collections of the binary variables of ${\bf r}(\omega)$.   This fact, and the above observation that the  probability distribution of the random current $\n$ conditioned on ${\bf r}(\omega)$  does not depend on $L$, implies the existence of $\widehat{\rm P}^\#_\beta$. 

\paragraph{Proof of R2} (Translation invariance) 

 Fix $x\in \bbZ^d$. The limit of the probability of the event $\calC_E$, where $E$ is a finite set of edges, is the same if the sequence $(\Lambda_L)_{L\ge0}$ is replaced by the sequence $(x+\Lambda_L)_{L\ge 0}$. (Simply use \eqref{eq:ProbE} and the convergence of $\langle \cdots \rangle_{{x+\Lambda_L},\beta}^\#$ to $\langle \cdots \rangle_{\beta}^\#$.) This immediately implies that $\widehat{\rm P}^\#_{\beta}$ is invariant under translations.

\paragraph{Proof of R3} (Ergodicity)  

Since every translationally invariant event can be approximated by events depending on a finite number of edges, it is sufficient to prove that for any events $A$ and $B$ depending on a finite number of edges,
\be  \label{eq:ergcond}
\lim_{\|x\|_1\rightarrow \infty}\widehat{\rm P}^\#_\beta[A\cap \tau_xB]\ =\  \widehat{\rm P}^\#_\beta[A]\  \widehat{\rm P}^\#_\beta[B]. 
\ee 
In view of the conditional independence of $\n$ given the parity variables ${\bf r}$, the requirement can be further simplify to the proof that for any two finite sets $E$ and $F$ of edges, 
\be 
\lim_{\|x\|_1\rightarrow\infty}\widehat{\rm P}^\#_\beta[\calC_{E\cup(x+F)}]\  =\  \widehat{\rm P}^\#_\beta[\calC_E] \  \widehat{\rm P}^\#_\beta[\calC_F] ~. 
\ee 
Using the  expression \eqref{eq:ProbE}, for $x$ large enough so that  $E\cap (x+F) =\emptyset$: 
\begin{eqnarray}   \label{eq:ProbE2}
\frac {  \widehat{\rm P}^\#_{\beta}[\calC_{E\cup(x+F)}]  } 
       {\widehat{\rm P}^\#_\beta[\calC_E] \  \widehat{\rm P}^\#_\beta[\calC_F] } &  
= &    
  \frac {\big \langle e^{   -\beta K_E } e^{   -\beta K_{x+F} } \big \rangle^\#_{ \beta } } {  \big \langle e^{-\beta K_E } \big \rangle^\#_{\beta }  \   \big \langle e^{-\beta K_F } \big \rangle^\#_{\beta }  } 
\end{eqnarray}
Ergodicity of the random current states can therefore be presented as an implication of the  statement that this ratio tends to 1.   
This condition holds as a consequence of the mixing property of the states 
$ \big \langle   \cdots     \big \rangle^{\#}_\beta$  when restricted to  functions which are invariant under global spin flip (i.e. that $f(-\sigma)=f(\sigma)$ for every spin configuration $\sigma$).   For completeness we enclose the proof of the statement, which may be part of the folklore among experts, in Appendix A. 
\end{proof}

\begin{remark}  The relation of the ergodicity of $\widehat{\rm P}_\beta^+$ and $\widehat{\rm P}_\beta^0$ to the partial ergodicity of $\langle \cdots \rangle_\beta^+$ or $\langle \cdots \rangle_\beta^0$ (i.e. ergodicity of only the restriction to the $\sigma$ algebra of even event) can be compared to a similar relation in the random-cluster representation of the $Q$-state Potts models, with wired and free boundary conditions (see \cite{Gri06} for more details on these models).    The restriction is needed since  in the presence of symmetry breaking (at $\beta>\beta_c$) the state    $\langle \cdots \rangle_\beta^0$  is  not even mixing on functions which are odd with respect to the global spin flip. 

\end{remark}

\subsection{Percolation properties of the sum of random currents}
Define 
$\mathbb P_\beta$ to be the law of $\widehat{\n_1+\n_2}$, where  $\n_1$ and $\n_2$ are two independent currents with laws ${\rm P}^0_{\beta}$ and ${\rm P}^+_{\beta}$. We also set $\mathbb E_\beta$ for the expectation with respect to $\mathbb P_\beta$. 
Properties {\bf R2} and {\bf R3} of Theorem~\ref{def:current infinite} imply immediately that $\mathbb P_\beta$ is invariant and ergodic with respect to shifts.

We now prove that there cannot be more than one infinite cluster. This claim will be crucial in the proof of Theorem~\ref{thm:continuity}: it will replace the use of the FKG inequality, which is not available for the random current representation.
\begin{theorem}\label{thm:percolation}
For any translation invariant ferromagnetic  Ising model on $\Z^d$ satisfying $\mathbf{C1}-\mathbf{C4}$, there exists at most one infinite cluster $\mathbb P_\beta$-almost surely (at any $\beta \ge 0$).
\end{theorem}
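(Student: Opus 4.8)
\emph{Approach.} The plan is to run the Burton--Keane uniqueness argument for the translation--invariant ergodic percolation measure $\mathbb P_\beta$; the only genuine point is that the usual finite energy hypothesis \emph{fails} here and must be replaced by a substitute adapted to random currents. Since $\mathbb P_\beta$ is translation invariant and ergodic (immediate from {\bf R2}--{\bf R3} of Theorem~\ref{def:current infinite}), the number $N$ of infinite clusters of $\omega=\widehat{\n_1+\n_2}$ equals a constant in $\{0,1,2,\dots,\infty\}$ almost surely, and one must exclude $N\ge2$. The failure of finite energy is structural: because $\partial\n_1=\partial\n_2=\emptyset$, the parity of a current on an edge is forced by its parities on the adjacent edges, so a prescribed edge cannot in general be closed while the rest of the configuration is frozen --- this is exactly the monotonicity/FKG input that is missing.

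\emph{Finite--energy substitute.} The key lemma I would establish is: for every finite connected subgraph $G=(V,E)$ of the interaction graph $\{e:J_e>0\}$ (for instance a box, in the cases of interest) and every $\xi\in\{0,1\}^{E}$ for which $(V,\mathrm{supp}\,\xi)$ is connected and spans $V$, one has $\mathbb P_\beta[\,\omega|_E=\xi\mid\mathcal F_{E^c}\,]>0$ almost surely, where $\mathcal F_{E^c}:=\sigma(\omega_e:e\notin E)$. This rests on the conditional description recorded in the proof of Theorem~\ref{def:current infinite}: given the parity configurations $\mathbf r(\n_1),\mathbf r(\n_2)$, the values $\n_{1,e},\n_{2,e}$ ($e\in E$) are independent, each a Poisson$(\beta J_e)$ variable conditioned on its parity. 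Conditioning further on the parities outside $E$ (legitimate, since a conditional expectation of an a.s.\ positive variable is a.s.\ positive), the parity of $\n_i$ on $E$ must be a subgraph of $G$ with an even boundary forced from outside by sourcelessness, and any such subgraph is attained with positive conditional probability by the explicit product form of the finite--volume parity law; the graph--theoretic crux is that such a subgraph can be chosen inside $\mathrm{supp}\,\xi$, because a connected spanning subgraph of $G$ supports a subgraph with any prescribed even boundary, and the flux inherited across $\partial G$ is even as $\n_i$ is sourceless. Fixing such parities, one then has positive conditional probability that $\n_{i,e}\ge1$ throughout $\mathrm{supp}\,\xi$ (automatic where the parity is odd, and via $\n_{i,e}\in\{2,4,\dots\}$ where it is even) while $\n_{1,e}=\n_{2,e}=0$ off $\mathrm{supp}\,\xi$, which realises $\{\omega|_E=\xi\}$.

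\emph{Execution of Burton--Keane.} With the lemma in hand, the classical scheme applies. If $2\le N<\infty$, a routine reduction (passing to an $\mathcal F_{E(\Lambda)^c}$--measurable event) produces a box $\Lambda$ such that, with positive probability, opening every bond of $E(\Lambda)$ --- a spanning connected configuration --- leaves strictly fewer than $N$ infinite clusters, contradicting that $N$ is a.s.\ constant. If $N=\infty$, then for a large ball $B$ about $0$ there is positive probability that at least three distinct infinite clusters of $\omega|_{E(B)^c}$ reach $B$; conditionally on $\mathcal F_{E(B)^c}$ one prescribes on $E(B)$ a ``tripod'' of three internally vertex--disjoint paths joining $0$ to one vertex of each such cluster --- routed through vertices that are isolated in $\omega|_{E(B)^c}$ and distributed among the three branches so that no external cluster meets two of them --- augmented by just enough further open bonds to span $B$ and to carry the parity fluxes forced across $\partial B$. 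The lemma makes this configuration have positive conditional probability, and on the resulting event $0$ is a trifurcation; translation invariance then forces a positive density of trifurcations in $\Lambda_L$, contradicting the deterministic bound that their number there is $O(|\partial\Lambda_L|)=o(|\Lambda_L|)$ (using the amenability of $\Z^d$). Hence $N\le1$.

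\emph{Main obstacle.} The heart of the proof --- and the replacement for FKG --- is the finite--energy lemma and, downstream of it, the trifurcation construction in the $N=\infty$ case: one is forced to work with entire regions rather than single bonds, to track the sourceless (parity) part of each current, and to route the tripod compatibly with both the external connectivity and the fluxes the region is forced to absorb. Once the lemma is phrased at the level of spanning connected configurations on finite regions, the residual combinatorics (including a separate treatment of $d=1$) is routine.
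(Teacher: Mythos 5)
Your overall strategy -- Burton--Keane for the ergodic measure $\mathbb P_\beta$, with the missing finite-energy property replaced by a substitute adapted to the sourcelessness constraint -- is exactly the paper's, and your treatment of the case $2\le N<\infty$ (open every bond of a large box to merge the clusters, contradict the $0$--$1$ law) coincides with the paper's, which only needs \emph{insertion} tolerance: the map $\widehat\Phi_N$ opening all bonds of $\Lambda_N$ satisfies $\mathbb P_\beta[\widehat\Phi_N(\calE)]\ge c\,\mathbb P_\beta[\calE]$, proved by a finite-volume change-of-current argument (send $(\n_{1,e},\n_{2,e})=(0,0)$ to $(0,2)$, which preserves $\partial\n_i=\emptyset$) with a constant uniform in the volume, then passed to the weak limit. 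Your stronger conditional lemma, allowing edges to be \emph{closed} as well, is plausibly true, but as stated it needs the same finite-volume, map-based formulation with a volume- and boundary-uniform constant; asserting positivity of an infinite-volume conditional probability is not by itself a proof, and ``a conditional expectation of an a.s.\ positive variable is a.s.\ positive'' does not establish that the conditional probability of the target event is positive to begin with.

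The genuine gap is in the case $N=\infty$. Your own lemma forces $\mathrm{supp}\,\xi$ to be connected and to span $B$ (every vertex of $B$ may have to absorb an odd parity flux from outside), so the prescribed configuration must be a spanning connected subgraph of $B$ in which $0$ is nonetheless a cut point separating the three chosen infinite clusters, \emph{and} no other external cluster (finite or infinite) touching $\partial B$ may bridge two of the three branches. The existence of such a routing is not established and is precisely the hard point: a single external cluster can attach to $\partial B$ at arbitrarily many, arbitrarily placed vertices, and in $d=1$ with long-range couplings (a case squarely within the scope of {\bf C1}--{\bf C4} and of the paper's applications) three internally disjoint branches from $0$ in an interval are especially problematic. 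Dismissing this as ``routine combinatorics'' buries the actual difficulty. The paper circumvents it by abandoning point trifurcations altogether: it defines \emph{coarse-trifurcations} (the whole box $\Lambda_{N/2}$ is wired by the insertion map, and the cluster of $0$ has at least three infinite components outside $\Lambda_N$), organizes them into a forest, and bounds the number of leaves by $O(n^{d-1})$ using $|J|<\infty$, yielding the density contradiction without ever needing to close an edge or to prevent external bridging. To complete your proof you would either have to supply the routing argument (and the $d=1$ variant) or switch to the coarse-trifurcation counting.
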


The following lemma is an equivalent of the insertion tolerance valid for many spin models. 

\begin{lemma}\label{lem:insertion}
Let $\widehat\Phi_N: \{0,1\}^{\calP_2(\bbZ^d)}\longrightarrow \{0,1\}^{\calP_2(\bbZ^d)}$ be the map opening all edges $\{x,y\}$ in $\Lambda_{N}$ with $J_{x,y}>0$. Let $N>0$, then there exists $c=c(N,J,\beta)>0$ such that for any event $\calE$,
$$\mathbb P_\beta[\widehat\Phi_N(\calE)] \ \ge\  c\  \mathbb P_\beta[\calE].$$ 
\end{lemma}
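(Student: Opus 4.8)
\emph{Overview.} The plan is to reduce the bound to its one–edge version, establish that at finite volume by an explicit modification of the current pair, and lift it to $\mathbb P_\beta$ via property {\bf R1} of Theorem~\ref{def:current infinite}.

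\emph{Reduction to a single edge.} Let $E_N=\{\,e\subset\Lambda_N:J_e>0\,\}$; this is a finite set, and $\widehat\Phi_N=\Psi_{e_k}\circ\dots\circ\Psi_{e_1}$ where $E_N=\{e_1,\dots,e_k\}$ and $\Psi_e$ denotes the map that opens the single edge $e$. Since $\Psi_e(\calE)$ is again an event, it is enough to find, for each $e$ with $J_e>0$, a constant $c_e=c_e(J,\beta)>0$ with $\mathbb P_\beta[\Psi_e(\calE)]\ge c_e\,\mathbb P_\beta[\calE]$ for every event $\calE$; one then takes $c=\prod_j c_{e_j}$. Splitting $\calE=\calE_0\sqcup\calE_1$, $\calE_i:=\calE\cap\{\omega_e=i\}$, and noting $\Psi_e(\calE)=\calE_1\cup\Psi_e(\calE_0)$, the inequality $\mathbb P_\beta[A\cup B]\ge\tfrac12(\mathbb P_\beta[A]+\mathbb P_\beta[B])$ reduces the task to the case $\calE=\calE_0\subset\{\omega_e=0\}$, where $\Psi_e$ acts as the measurable bijection ``set $\omega_e:=1$'' onto its image $\{\omega_e=1\}\cap\calF$, with $\calF$ the event (depending only on the coordinates $\neq e$) that closing $e$ lands the configuration in $\calE_0$.

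\emph{The one–edge bound at finite volume.} Fix $e=\{x,y\}$ with $J_e>0$ and a cylinder event $\calE_0\subset\{\omega_e=0\}$; take $L$ large enough that $e$ and the edges $\calE_0$ depends on lie in $\Lambda_L$, and let $\mathbb P_{\Lambda_L,\beta}$ be the law of $\widehat{\n_1+\n_2}$ under $\mathrm P^0_{\Lambda_L,\beta}\otimes\mathrm P^+_{\Lambda_L,\beta}$. Every pair $(\n_1,\n_2)$ contributing to $\{\widehat{\n_1+\n_2}\in\calE_0\}$ has $(\n_1)_e=(\n_2)_e=0$, and the map $(\n_1,\n_2)\mapsto(\n_1+2\,\mathbf 1_e,\,\n_2)$ is injective, preserves $\partial\n_1=\emptyset$ (the parities of the fluxes at $x$ and $y$ are unchanged), sends $\widehat{\n_1+\n_2}$ to $\Psi_e(\widehat{\n_1+\n_2})\in\Psi_e(\calE_0)$, and multiplies the $\mathrm P^0_{\Lambda_L,\beta}$–probability of $\n_1$ by exactly $\tfrac12(\beta J_e)^2$ (the partition function cancels). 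Summing over the image of this injection gives
\[
\mathbb P_{\Lambda_L,\beta}\big[\Psi_e(\calE_0)\big]\ \ge\ \tfrac12(\beta J_e)^2\,\mathbb P_{\Lambda_L,\beta}[\calE_0],
\]
a bound uniform in $L$. Both $\calE_0$ and $\Psi_e(\calE_0)$ depend on finitely many edges, so {\bf R1} yields $\mathbb P_\beta[\Psi_e(\calE_0)]\ge\tfrac12(\beta J_e)^2\,\mathbb P_\beta[\calE_0]$ for cylinder $\calE_0$. Since $\calE_0\mapsto\mathbb P_\beta[\calE_0]$ and $\calE_0\mapsto\mathbb P_\beta[\Psi_e(\calE_0)]$ (the latter being the pushforward of $\mathbb P_\beta(\,\cdot\cap\{\omega_e=1\})$ under the map that closes $e$) are finite measures on $\{\omega_e=0\}$, the inequality extends from the generating algebra of cylinders to all measurable $\calE_0$ by a standard approximation argument. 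Feeding this into the reductions above completes the proof, with $c_e=\tfrac12\min\{1,\tfrac12(\beta J_e)^2\}$.

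\emph{Where the content lies.} There is no real obstacle; the one point needing care is that $\widehat\Phi_N$ acts on the $\{0,1\}$–projection rather than on the currents themselves, so one opens $e$ not by forcing an odd value on it — which would destroy the source constraint $\partial\n_1=\emptyset$ — but by inserting an even number (here two) of current lines along $e$. This is precisely what substitutes for the insertion tolerance of Bernoulli or random–cluster percolation, and it is here that $J_e>0$ and $\beta>0$ are used. (Alternatively one may stay in infinite volume: conditionally on the parity variables the $(\n_i)_{x,y}$ are independent $\mathrm{Poisson}(\beta J_{x,y})$ conditioned on their parities, as in the proof of Theorem~\ref{def:current infinite}; conditioning on everything except $(\n_1)_e,(\n_2)_e$ then gives $\mathbb P_\beta[\omega_e=1\mid\cdot\,]\ge\tfrac{1-q}{q}\,\mathbb P_\beta[\omega_e=0\mid\cdot\,]$ with $q=(e^{-\beta J_e}/\cosh\beta J_e)^2<1$, which yields the one–edge bound directly for all measurable events.) Everything else — the passage from one edge to all of $E_N$, the uniformity in $L$, and the extension to arbitrary events — is routine.
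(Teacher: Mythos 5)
Your main argument is correct and rests on the same mechanism as the paper's proof: on a closed edge $e$ with $J_e>0$ one inserts an even amount of current (the value $2$) into one of the two currents, which preserves the empty-source constraint and multiplies the weight by exactly $(\beta J_e)^2/2$; the bound is proved in finite volume uniformly in the volume and transferred to $\mathbb P_\beta$ via {\bf R1}. The only real difference is bookkeeping: the paper opens all of $\Lambda_N$ at once and must then divide by the number of $\widehat\Phi_N$-preimages, bounded crudely by $2^{|\calP_2(\Lambda_N)|}$, whereas you open one edge at a time and split $\calE$ according to whether $e$ is already open, which makes the current-level map injective and removes the preimage count at the cost of a factor $\tfrac12$ per edge from the union bound. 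You also spell out the passage from cylinder events to arbitrary measurable events, which the paper dismisses in one line even though the lemma is later applied to non-cylinder events in the Burton--Keane argument; that is a worthwhile addition. One slip: in your parenthetical infinite-volume alternative, a Poisson variable conditioned to be even vanishes with probability $1/\cosh(\beta J_e)$, so the relevant conditional probability of $\{\omega_e=0\}$ is $\cosh^{-2}(\beta J_e)$ rather than $(e^{-\beta J_e}/\cosh\beta J_e)^2$, and with your value of $q$ the displayed conditional inequality is actually false; since this is only an aside, the main proof is unaffected.
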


\begin{proof}
It is sufficient to consider events $\calE$ depending on a finite number of edges. Let $\mathbb P_{\Lambda_n,\beta}$ be the law of $\widehat{\n_1+\n_2}$, where $\n_1$ and $\n_2$ are two independent currents with respective laws ${\rm P}_{\Lambda_n,\beta}^0$ and ${\rm P}_{\Lambda_n,\beta}^+$.
Property {\bf R1} of Theorem~\ref{def:current infinite} shows that $\mathbb P_{\Lambda_n,\beta}$ converges weakly to $\mathbb P_\beta$. This reduces the proof to showing the existence of $c=c(N,J,\beta)>0$ on $\Lambda_n$, with a value which does not depend on $n>N$. 

Consider the transformation $\Phi_N:(\Omega_{\Lambda_n})^2\longrightarrow(\Omega_{\Lambda_n})^2$ defined by
$$\Phi_N(\n_1,\n_2)\{x,y\}=\begin{cases}(0,2)&\text{ if $(\n_1\{x,y\},\n_2\{x,y\})=(0,0)$,} \\
&\text{ $J_{x,y}>0$, and $x,y\in\Lambda_{N}$,}\\
(\n_1\{x,y\},\n_2\{x,y\})&\text{ otherwise},\end{cases}$$  
where exceptionally $\m\{x,y\}$ denotes $\m_{\{x,y\}}$ for ease of notation.
Fix $\omega\in\{0,1\}^{\Lambda_n}$ and let $\Omega_2=\{(\n_1,\n_2)\in(\Omega_{\Lambda_n})^2:\widehat{\n_1+\n_2}=\omega\}$. The set $\Phi_N(\Omega_2)$ is obtained from $\Omega_2$ by changing the value of the current $\n_2$ on edges $\{x,y\}$ with $\omega_{\{x,y\}}=0$ from 0 to 2. 
Therefore,
\begin{multline}\mathbb P_{\Lambda_n,\beta}[\widehat\Phi_N(\calE)] \ =\  \sum_{\omega'\in\widehat\Phi(\calE)}\mathbb P_{\Lambda_n,\beta}[\omega']=\sum_{\omega\in\calE}\frac{1}{{\rm Card}[\widehat\Phi_N^{-1}(\widehat\Phi_N(\omega))]}\mathbb P_{\Lambda_n,\beta}[\widehat\Phi_N(\omega)]  
 \\[2ex]   
\ge \  2^{-{\rm Card}(\calP_2(\Lambda_N))}\sum_{\omega\in\calE}\mathbb P_{\Lambda_n,\beta}[\widehat\Phi_N(\omega)] \ =\  
2^{-{\rm Card}(\calP_2(\Lambda_N))}\sum_{\omega\in\calE}{\rm P}_{\Lambda_n,\beta}^0\otimes{\rm P}_{\Lambda_n,\beta}^+[\Phi_N(\Omega_2)] 
\\[2ex]  
\ge\  2^{-{\rm Card}(\calP_2(\Lambda_N))}\sum_{\omega\in\calE}\Big(\prod_{\{x,y\}\subset\Lambda_N:\,\omega_{x,y}=0}\frac{(\beta J_{x,y})^2}{2}\Big){\rm P}_{\Lambda_n,\beta}^0\otimes{\rm P}_{\Lambda_n,\beta}^+[\Omega_2]  \\[2ex]  
\mbox{ } \qquad \quad \ge \  c\  \sum_{\omega\in\calE}{\rm P}_{\Lambda_n,\beta}^0\otimes{\rm P}_{\Lambda_n,\beta}^+[\Omega_2]\ =\ c\ \mathbb P_{\Lambda_n,\beta}[\calE] 
\, ,  \hfill \end{multline}
where $c=c(N,J,\beta)>0$ does not depend on $n$. 
In the first inequality, we used the fact that the number of pre-images of each configuration is smaller than 2 to the power the number of pairs of points in $\Lambda_{N}$ (since one has to decide whether edges of $\Lambda_N$ were open or closed before the transformation). 
\end{proof}

\begin{proof}[Proof of Theorem~\ref{thm:percolation}] 
For $\ell\in\bbN\cup\{\infty\}$, let $\calE_\ell$ be the event that there exist exactly $\ell$ disjoint infinite clusters. We must prove that $\mathbb P_\beta[\calE_\ell]=0$ for $\ell\ge 2$. The proof is based on a variation of the Burton-Keane argument \cite{BK89}. Let $k>0$ such that for any vertex $y$ satisfying $\|y\|_1=1$, there exist $0=x_0,\dots,x_m=y$ with $J_{x_0,x_1}\dots J_{x_{m-1},x_m}>0$ and with $x_i\in\Lambda_k$ for every $i\le m$ (the existence of this $k$ is guaranteed by the aperiodicity condition).

\paragraph{Proof of $\mathbb P_\beta[\calE_\ell]=0$ for $2\le \ell<\infty$.} Let $\ell\ge 2$.  Let $\calF_n$ be the event that the $\ell$ infinite clusters intersect $\Lambda_n$. 
Fix $N>k$ large enough so that $\mathbb P_\beta[\calF_N]\ge\tfrac12\mathbb P_\beta[\calE_\ell]$.
Lemma~\ref{lem:insertion} implies that
$\mathbb P_\beta[\widehat\Phi_{2N}(\calF_N)]\ge \tfrac c2\mathbb P_\beta[\calE_\ell]$.
Any configuration in $\widehat\Phi_{2N}(\calF_N)$ contains exactly one infinite cluster since all the vertices in $\Lambda_N$ are connected. Therefore, 
$$\mathbb P_\beta[\calE_1]\ge \tfrac c2\mathbb P_\beta[\calE_\ell].$$
Ergodicity implies that $\mathbb P_\beta[\calE_\ell]$ and $\mathbb P_\beta[\calE_1]$ are equal to $0$ or $1$, therefore $\mathbb P_\beta[\calE_\ell]=0$.

\paragraph{Proof of $\mathbb P_\beta[\calE_\infty]=0$.} Assume that $\mathbb P_\beta[\calE_\infty]>0$ and consider $N>2k$ large enough so that 
$$\mathbb P_\beta[\text{three distinct infinite clusters intersect the box $\Lambda_{N/2}$}]>0.$$
Lemma~\ref{lem:insertion} (applied to $\widehat\Phi_{N}$) implies that  $\mathbb P_\beta[{\rm CT}_0]>0$, where ${\rm CT}_0$ is the following event:   
\begin{itemize}[noitemsep]    
\item all vertices in $\Lambda_{N/2}$ are connected to each other in $\Lambda_N$,
\item If $\calC$ is the cluster of 0, then $\calC\cap(\bbZ^d\setminus\Lambda_N)$ contains at least three distinct infinite connected components.\end{itemize}
A vertex $x\in (2N+1)\bbZ^d$ is called a coarse-trifurcation if $\tau_x{\rm CT}_0=:{\rm CT}_x$ occurs. By invariance under translation, $\mathbb P_\beta[{\rm CT}_x]=\mathbb P_\beta[{\rm CT}_0]$.
     
Fix $n\gg N$. The set $T$ of {\em coarse-trifurcations} in $\Lambda_n$ has a natural structure of forest $\calF$ constructed inductively as follows. 
 \begin{itemize}     
 \item[Step 1] At time 0, all the vertices in $T$ are {\em unexplored}. 
 
 \item[Step 2] If there does not exist any unexplored vertex in $T$ left, the algorithm terminates. Otherwise, pick an unexplored vertex $t\in T$ and mark it explored (by this we mean that it is not considered as an unexplored vertex anymore). Go to Step 3. 
 
 \item[Step 3] Consider the cluster $\calC_t$ of vertices $x\in \bbZ^d$ connected to $t$ in $\Lambda_n$. This cluster decomposes into $k\ge 3$ disjoint connected components of $\bbZ^d\setminus (t+\Lambda_N)$ denoted $\calC_t^{(1)},\dots,\calC_t^{(k)}$.   For $i=1,\dots, k$, do the following:
  \begin{itemize}    
 \item if there exist two vertices $x\in\calC_t^{(i)}$ and $y\notin \Lambda_n$ such that 
  $\{x,y\}$ is open, and there exists an open path in $\Lambda_n$ going from $x$ to $t$ and not passing at distance $N$ from a coarse-trifurcation in $\calC_t^{(i)}\cap (T\setminus\{t\})$,
 then add the vertex $y$ to $\calF$ together with the edge $\{t,y\}$.  
 \item if there is no such vertex, then there must be a coarse-trifurcation $s\in\calC_t^{(i)}$ connected by an open path not passing at distance $N$ from a trifurcation in $\calC_t^{(i)}\cap (T\setminus\{t,s\})$. If $s$ is not already a vertex of $\calF$, add it. Then, add the edge $\{t,s\}$. \end{itemize}
 \item[Step 4] Go to Step 2.
 \end{itemize}
 The graph obtained is a forest (due to the structure of coarse-trifurcations). Each coarse-trifurcation corresponds to a vertex of the forest of degree at least three. Thus, the number of coarse-trifurcations must be smaller than the number of leaves.   Let $N$ be the number of leaves, we find
 \begin{equation}\label{eq:198}\mathbb P_\beta[{\rm CT}_0]\frac{(2n+1)^d}{(2N+1)^d}\le \mathbb E_\beta[N].\end{equation}
Yet, leaves are vertices outside $\Lambda_n$ which are connected by an open edge to a vertex in $\Lambda_n$, therefore
$$\mathbb E_\beta[N]\le 2d(2n+1)^{d-1}\sum_{k=0}^n \sum_{x \in \bbZ^d: \, \|x\|_1 \ge k}J_{0,x} ~.$$
Since $|J|<\infty$, we find that
$$0<\frac {\mathbb P_\beta[{\rm CT}_0]}{(2N+1)^d}\le \frac{\mathbb E_\beta[N]}{(2n+1)^d}\longrightarrow 0\quad\text{as $n\rightarrow \infty$. }$$
This contradicts $\mathbb P_\beta[{\rm CT}_0]>0$ and therefore $\mathbb P_\beta[\calE_\infty]$ must be zero. The claim follows.\end{proof}

\section{Proofs of  the main results} 

\subsection{A bound on the percolation probability} 

 Let us start with a crucial relation which justifies the consideration of $\mathbb P_\beta$.   
\begin{theorem}\label{thm:percolation}
For  $\beta$ at which $\widetilde M_{LRO}(\beta) =0$, also  $\mathbb P_{\beta}\left[0\leftrightarrow\infty\right] =  0.$
\end{theorem}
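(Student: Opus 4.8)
The plan is to combine three ingredients: an application of the switching lemma (Lemma~\ref{switching}) to the nested pair $\Lambda_L\subset\Lambda_L\cup\{\delta\}$, which rewrites the two--current connection event as a \emph{product} of a free-- and a plus--boundary two--point function; the uniqueness of the infinite cluster of $\mathbb P_\beta$ established above; and a second--moment / Jensen inequality that feeds directly into the definition of $\widetilde M_{LRO}(\beta)$. In fact the argument will produce the quantitative bound $\widetilde M_{LRO}(\beta)\ge\mathbb P_\beta[0\leftrightarrow\infty]$, from which the statement is immediate.

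First I would establish the basic identity in finite volume. Apply Lemma~\ref{switching} with $G=\Lambda_L$, $H=\Lambda_L\cup\{\delta\}$, $A=\{x,y\}$ and $F\equiv 1$: since $A\,\Delta\,\{x,y\}=\emptyset$, the left side factorizes as $\big(\sum_{\n_1\in\Omega_{\Lambda_L}:\,\partial\n_1=\{x,y\}}w_\beta(\n_1)\big)\big(\sum_{\n_2\in\Omega_{\Lambda_L\cup\{\delta\}}:\,\partial\n_2=\{x,y\}}w_\beta(\n_2)\big)$, while the right side becomes $\sum_{\partial\n_1=\partial\n_2=\emptyset}w_\beta(\n_1)w_\beta(\n_2)\,\indf{x\leftrightarrow y\ \text{in}\ \Lambda_L}$, the connection referring to $\widehat{\n_1+\n_2}$ using only edges contained in $\Lambda_L$. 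Dividing both sides by the product of the two sourceless partition functions and using \eqref{eq:7} and \eqref{eq:8} gives
\[
\langle\sigma_x\sigma_y\rangle^0_{\Lambda_L,\beta}\,\langle\sigma_x\sigma_y\rangle^+_{\Lambda_L,\beta}\ =\ \mathbb P_{\Lambda_L,\beta}\big[\,x\leftrightarrow y\ \text{in}\ \Lambda_L\,\big],
\]
where $\mathbb P_{\Lambda_L,\beta}$ is the law of $\widehat{\n_1+\n_2}$ introduced in the proof of Lemma~\ref{lem:insertion}. Bounding $\langle\sigma_x\sigma_y\rangle^+_{\Lambda_L,\beta}\le1$ shows the right side is at most $\langle\sigma_x\sigma_y\rangle^0_{\Lambda_L,\beta}$. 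To pass to the infinite volume I would fix $M$: the event $\{x\leftrightarrow y\ \text{in}\ \Lambda_M\}$ is a cylinder event, so the weak convergence $\mathbb P_{\Lambda_L,\beta}\to\mathbb P_\beta$ (Property~{\bf R1}, as already used for Lemma~\ref{lem:insertion}), monotonicity in $L$, and convergence of the finite-volume correlations give $\mathbb P_\beta[x\leftrightarrow y\ \text{in}\ \Lambda_M]\le\langle\sigma_x\sigma_y\rangle^0_\beta$; letting $M\to\infty$ by monotone convergence I would obtain
\[
\mathbb P_\beta[x\leftrightarrow y]\ \le\ \langle\sigma_x\sigma_y\rangle^0_\beta\qquad\text{for all }x,y\in\bbZ^d .
\]

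Next I would run the second--moment argument. Write $\theta:=\mathbb P_\beta[0\leftrightarrow\infty]$, which by translation invariance (Property~{\bf R2}) equals $\mathbb P_\beta[x\leftrightarrow\infty]$ for every $x$. Since, by the uniqueness of the infinite cluster, the occurrence of both $\{x\leftrightarrow\infty\}$ and $\{y\leftrightarrow\infty\}$ forces $\{x\leftrightarrow y\}$ up to a $\mathbb P_\beta$--null set, for any finite $B\subset\bbZ^d$
\[
\sum_{x,y\in B}\langle\sigma_x\sigma_y\rangle^0_\beta\ \ge\ \sum_{x,y\in B}\mathbb P_\beta[x\leftrightarrow y]\ \ge\ \mathbb E_\beta\Big[\Big(\sum_{x\in B}\indf{x\leftrightarrow\infty}\Big)^{2}\Big]\ \ge\ \Big(\mathbb E_\beta\Big[\sum_{x\in B}\indf{x\leftrightarrow\infty}\Big]\Big)^{2}\ =\ |B|^{2}\theta^{2},
\]
the last inequality being Jensen's. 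Dividing by $|B|^2$ and taking the infimum over finite $B$ yields $\widetilde M_{LRO}(\beta)^2\ge\theta^2$, so $\widetilde M_{LRO}(\beta)=0$ forces $\mathbb P_\beta[0\leftrightarrow\infty]=0$.

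The step I expect to demand the most care is the first one: reading off the clean product identity from Lemma~\ref{switching} with the \emph{asymmetric} pair $G=\Lambda_L\subset H=\Lambda_L\cup\{\delta\}$ (this asymmetry is precisely what makes $\n_1$ carry free-- and $\n_2$ plus--type weights), and then transferring the bound to infinite volume, since $\{x\leftrightarrow y\}$ is not a cylinder event and must be approximated from below through the events $\{x\leftrightarrow y\ \text{in}\ \Lambda_M\}$. Conceptually the one genuinely new ingredient is that the uniqueness of the infinite cluster takes over the role that an FKG / square--root trick would play if it were available, which it is not for the random current representation.
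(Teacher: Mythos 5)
Your proof is correct and follows essentially the same route as the paper: the switching lemma applied to the pair $\Lambda_L\subset\Lambda_L\cup\{\delta\}$ yielding $\langle\sigma_x\sigma_y\rangle^0_{\Lambda_L,\beta}\langle\sigma_x\sigma_y\rangle^+_{\Lambda_L,\beta}={\rm P}^0_{\Lambda_L,\beta}\otimes{\rm P}^+_{\Lambda_L,\beta}[x\leftrightarrow y\text{ in }\Lambda_L]$, the passage to $\mathbb P_\beta[x\leftrightarrow y]\le\langle\sigma_x\sigma_y\rangle^0_\beta$ via finite-range approximation, and the Cauchy--Schwarz second-moment argument with uniqueness of the infinite cluster standing in for FKG. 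Your extra care with the non-cylinder event $\{x\leftrightarrow y\}$ is precisely the ``straightforward'' justification the paper relegates to a parenthetical remark.
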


\begin{proof}
Let $L>0$ and let $x,y\in\Lambda_L$. The switching lemma (Lemma~\ref{switching}) implies
\begin{multline}
\label{eq:2}{\rm P}^0_{\Lambda_L,\beta}\otimes {\rm P}^+_{\Lambda_L,\beta}\left[x\lr{\widehat{\n_1+\n_2}}y\text{ in }\Lambda_L\right] \\[1ex]  
:=\frac{\displaystyle\sum_{\substack{\n_1\in\Omega_{\Lambda_L}:\,\partial \n_1=\emptyset\\ \n_2\in\Omega_{\Lambda_L\cup\{\delta\}}:\,\partial \n_2=\emptyset}}w_\beta(\n_1)w_\beta(\n_2)\indf{x\lr{\widehat{\n_1+\n_2}}y\text{ in }\Lambda_L}}{\displaystyle\sum_{\substack{\n_1\in\Omega_{\Lambda_L}:\,\partial \n_1=\emptyset\\ \n_2\in\Omega_{\Lambda_L\cup\{\delta\}}:\,\partial \n_2=\emptyset}}w_\beta(\n_1)w_\beta(\n_2)}   \notag \\[2ex]  
\end{multline} 
\begin{align}
&=\frac{\displaystyle\sum_{\substack{\n_1\in\Omega_{\Lambda_L}:\,\partial \n_1=\{x,y\}\\ \n_2\in\Omega_{\Lambda_L\cup\{\delta\}}:\,\partial \n_2=\{x,y\} }}w_\beta(\n_1)w_\beta(\n_2)}{\displaystyle\sum_{\substack{\n_1\in\Omega_{\Lambda_L}:\,\partial \n_1=\emptyset\\ \n_2\in\Omega_{\Lambda_L\cup\{\delta\}}:\,\partial \n_2=\emptyset}}w_\beta(\n_1)w_\beta(\n_2)}.\end{align}
The representations of spin-spin correlations \eqref{eq:7} and \eqref{eq:8} then imply that
\begin{equation}\label{eq:6}{\rm P}^0_{\Lambda_L,\beta}\otimes {\rm P}^+_{\Lambda_L,\beta}\left[x\lr{\widehat{\n_1+\n_2}}y\text{ in }\Lambda_L\right]=\langle\sigma_x\sigma_y\rangle_{\Lambda_L,\beta}^0\langle\sigma_x\sigma_y\rangle_{\Lambda_L,\beta}^+\le \langle\sigma_x\sigma_y\rangle_{\Lambda_L,\beta}^0.
\end{equation}
The right-hand side converges to $\langle\sigma_x\sigma_y\rangle_{\beta}^0$ as $L$ tends to infinity. Since the event on the left-hand side can be expressed in terms of $\widehat\n_1$ and $\widehat\n_2$, the convergence of $\widehat{\rm P}^+_{\Lambda_L,\beta}$ and $\widehat{\rm P}^0_{\Lambda_L,\beta}$ to $\widehat{\rm P}^+_{\beta}$ and $\widehat{\rm P}^0_{\beta}$ provided by Theorem~\ref{def:current infinite} implies that the left-hand side converges to $\mathbb P_{\beta}\left[x\leftrightarrow y\right]$. (The percolation event does not depend on finitely many edges, but justifying passing to the limit is straightforward by first considering the events that 0 is connected to distance $N$.) Therefore,
\begin{equation}\label{eq:111}
\mathbb P_{\beta}\left[x\leftrightarrow y\right]\ \le \  \langle\sigma_x\sigma_y\rangle_{\beta}^0\,. 
\end{equation}
Let now $B \subset \Z^d$ be a finite subset. The Cauchy-Schwarz inequality applied to the random variable $X=\sum_{x\in B} \indf{ x\leftrightarrow \infty }$ leads to
\begin{align*} 
\big(\,|B| \,  \mathbb P_{\beta}[0\leftrightarrow\infty]\,\big)^2 :=\mathbb E_\beta[X]^2&\le \mathbb E_\beta[X^2]=:\sum_{x,y\in B}\mathbb P_{\beta}[x,y\leftrightarrow\infty]. 
\end{align*}
The uniqueness of the infinite cluster thus implies
\begin{align}     \label{eq:133}
     \big(\,|B| \, \mathbb P_{\beta}[0\leftrightarrow\infty]\,\big)^2 &\le \sum_{x,y\in B}\mathbb P_{\beta}[x,y\leftrightarrow\infty]\le\sum_{x,y\in B}\mathbb P_{\beta}[x\leftrightarrow y].\end{align}
Combining this relation with \eqref{eq:111} and optimizing over $B$ we get: 
\begin{align}   \label{eq:567}  
  \mathbb  P_{\beta}[0\leftrightarrow\infty ]\,  ^2 \ & 
    \le \ \inf_{B\in \Z^d, |B|< \infty}  \frac{1}{|B|^2}\sum_{x,y\in  B}\langle\sigma_x\sigma_y\rangle_{\beta}^0  
     \  =  \  \widetilde M_{LRO}(\beta)^2~,  
\end{align}
which proves the claim.
\end{proof}

\begin{remark}   In the last step of \eqref{eq:133} one can see uniqueness of the infinite cluster used as a substitute for the classical percolation argument which utilizes the FKG inequality, which we do not have for random currents.\end{remark}

\subsection{Proof of Theorem~\ref{thm:continuity} }

Fix a pair of vertices $x$ and $y$. Applying the switching lemma  (Lemma~\ref{switching}) again, we find that for $L>0$:

\begin{align}&\langle\sigma_x\sigma_y\rangle_{\Lambda_L,\beta}^+-\langle\sigma_x\sigma_y\rangle_{\Lambda_L,\beta}^0=\frac{\displaystyle\sum_{\n_2\in\Omega_{\Lambda_L\cup\{\delta\}}:\,\partial \n_2=\{x,y\}}w_\beta(\n_2)}{\displaystyle\sum_{\n_2\in\Omega_{\Lambda_L\cup\{\delta\}}:\,\partial \n_2=\emptyset}w_\beta(\n_2)}-\frac{\displaystyle\sum_{\n_1\in\Omega_{\Lambda_L}:\,\partial \n_1=\{x,y\}}w_\beta(\n_1)}{\displaystyle\sum_{\n_1\in\Omega_{\Lambda_L}:\,\partial \n_1=\emptyset}w_\beta(\n_1)} \notag \\[2ex] 
&=\frac{\displaystyle\sum_{\substack{\n_1\in\Omega_{\Lambda_L}:\,\partial \n_1=\emptyset\\ \n_2\in\Omega_{\Lambda_L\cup\{\delta\}}:\,\partial \n_2=\{x,y\}}}w_\beta(\n_1)w_\beta(\n_2)-\sum_{\substack{\n_1\in\Omega_{\Lambda_L}:\,\partial \n_1=\{x,y\}\\ \n_2\in\Omega_{\Lambda_L\cup\{\delta\}}:\,\partial \n_2=\emptyset}}w_\beta(\n_1)w_\beta(\n_2)}{\displaystyle\sum_{\substack{\n_1\in\Omega_{\Lambda_L}:\,\partial \n_1=\emptyset\\ \n_2\in\Omega_{\Lambda_L\cup\{\delta\}}:\,\partial \n_2=\emptyset}}w_\beta(\n_1)w_\beta(\n_2)}  \notag   
\end{align} 
\begin{align} 
&=\frac{\displaystyle\sum_{\substack{\n_1\in\Omega_{\Lambda_L}:\,\partial \n_1=\emptyset\\ \n_2\in\Omega_{\Lambda_L\cup\{\delta\}}:\,\partial \n_2=\{x,y\}}}w_\beta(\n_1)w_\beta(\n_2)\left(1-\indf{x\lr{\widehat{\n_1+\n_2}}y\text{ in }\Lambda_L}\right)}{\displaystyle\sum_{\substack{\n_1\in\Omega_{\Lambda_L}:\,\partial \n_1=\emptyset\\ \n_2\in\Omega_{\Lambda_L\cup\{\delta\}}:\,\partial \n_2=\emptyset}}w_\beta(\n_1)w_\beta(\n_2)}\,. 
\end{align}
Yet, any configuration $\n_2$ with sources at $x$ and $y$ such that  $x$ and $y$ are not connected in $\Lambda_L$ necessarily satisfies that  $x$ and $y$ are connected to $\delta$. Therefore,
\begin{equation}\label{eq:57} 
\langle\sigma_x\sigma_y\rangle_{\Lambda_L,\beta}^+-\langle\sigma_x\sigma_y\rangle_{\Lambda_L,\beta}^0\le\frac{\displaystyle\sum_{\substack{\n_1\in\Omega_{\Lambda_L}:\,\partial \n_1=\emptyset\\ \n_2\in\Omega_{\Lambda_L\cup\{\delta\}}:\,\partial \n_2=\{x,y\}}}w_\beta(\n_1)w_\beta(\n_2) \indf{  x\lr{\widehat{\n_1+\n_2}}\delta }}{\displaystyle\sum_{\substack{\n_1\in\Omega_{\Lambda_L}:\,\partial \n_1=\emptyset\\ \n_2\in\Omega_{\Lambda_L\cup\{\delta\}}:\,\partial \n_2=\emptyset}}w_\beta(\n_1)w_\beta(\n_2)}.\end{equation} 
We shall now estimate the sum in the numerator by comparing it to the corresponding sum in which the source condition of $(\n_1,\n_2)$ is changed to   
$\partial \n_1 = \partial \n_2 =\emptyset $.   

Fix a sequence of vertices $x=x_0,\dots,x_m=y$ with $J_{x_i,x_{i+1}}>0$ for any $0\le i<m$.   For any $L$ large enough so that $x_i\in\Lambda_L$ for all $i\le m$, consider the one-to-many mapping which  assigns to  each $\omega$ a modified current configuration $\n_2$  with the change limited to $\n_2$ along the set of bonds $e_j= \{ x_i, x_{i+1}\}$,  $j=0,\ldots,m-1$,  at which the parity of all these variables is flipped, and the value of the new one is at least $1$ at each bond.    Under this mapping, the image of each pair $ (\n_1,\n_2)$ that contributes in the numerator of \eqref{eq:57} lies in the set for which  the connection event $x\lr{\widehat{\n_1+\n_2}}\delta $ remains satisfied, but the source set of $\n_2$ is reset to
$\partial \n_2 =\emptyset $.

Classifying the  current pairs according to the values of all the unaffected variables of $\{ (\n_1,\n_2)\}$, and the parity of $\n_2$ along the set of bonds $e_0,\dots,e_{m-1}$, it is easy to see 
 that under this one-to-many map the measure of each set is multiplied by a factor which is larger than or equal to
\be  
\Gamma_{x,y}  \ = \ \prod_{j=1}^m  \min \left \{ \frac{ \sinh (\beta J_{e_j}) }{ \cosh (\beta J_{e_j})}, \frac{  \cosh (\beta J_{e_j}) - 1}{  \sinh (\beta J_{e_j})} \right \}  \,. 
\ee  
(i.e. the original measure multiplied by $\Gamma_x  $ is dominated by the  measure of the image set.) 
This allows us to conclude:  
\begin{multline}  \label{eq:202}
 \sum_{\substack{\n_1\in\Omega_{\Lambda_L}:\,\partial \n_1\ =\  \emptyset   \\ \n_2\in\Omega_{\Lambda_L\cup\{\delta\}}:\,\partial \n_2=\{x,y\}}}w_\beta(\n_1)w_\beta(\n_2)\indf{ x\lr{\widehat{\n_1+\n_2}}\delta }  \ \le  \   \\  
 \le \   \Gamma_{x,y}^{-1} 
\sum_{\substack{\n_1\in\Omega_{\Lambda_L}:\,\partial \n_1=\emptyset\\ \n_2'\in\Omega_{\Lambda_L\cup\{\delta\}}:\,\partial \n_2'=\emptyset}}w_\beta(\n_1)w_\beta(\n_2')\,\indf{ x\lr{\widehat{\n_1+\n_2'}}\delta }\, .
\end{multline} 

Inserting this   in \eqref{eq:57}, we find that 
\begin{align}  \label{eq:489} 
 \langle\sigma_x\sigma_y\rangle_{\Lambda_L,\beta}^+-\langle\sigma_x\sigma_y\rangle_{\Lambda_L,\beta}^0 & 
\  \le \    \Gamma_{x,y}^{-1}   \ {\rm P}^0_{\Lambda_L,\beta}\otimes {\rm P}^+_{\Lambda_L,\beta}[x\lr{\widehat{\n_1+\n_2}} \delta]\, .   
\end{align}
Taking the limit $L\to \infty$ (which exists by  Theorem~\ref{def:current infinite}) we  obtain
\begin{equation}
\label{eq:99}
0 \ \le \   \langle\sigma_x\sigma_y\rangle_{\beta}^+-\langle\sigma_x\sigma_y\rangle_{\beta}^0 \  \le\  \  \Gamma_{x,y}^{-1}    \ \mathbb P_\beta\left[x\leftrightarrow\infty\right]\,. 
\end{equation}
(The percolation event on the right does not depend on finitely many edges, but justifying passing to the limit is straightforward by first considering the events that  $x$ is connected to distance $N$.) 

\medbreak

We now consider $\beta$ for which \eqref{eq:cont_cond}  holds.  Applying Theorem~\ref{thm:percolation} we conclude that $ \mathbb P_\beta\left[x\leftrightarrow\infty\right] =0$, and hence 
for any $x,y\in \bbZ^d$:  $\langle\sigma_x\sigma_y\rangle_{\beta_c}^+ = \langle\sigma_x\sigma_y\rangle_{\beta_c}^0$.  Thus, using the FKG inequality \cite{FKG71} and \eqref{eq:99}: 
\be 
0\le \langle\sigma_0\rangle_{\beta_c}^+\langle\sigma_x\rangle_{\beta_c}^+\le\langle\sigma_0\sigma_y\rangle_{\beta_c}^+ = \langle\sigma_0\sigma_x\rangle_{\beta_c}^0 
\ee 
for any $y\in\bbZ^d$.
The assumption that $\langle\sigma_0\sigma_x\rangle_{\beta_c}^0$ averages to zero over translations  leads to $\langle\sigma_0\rangle_{\beta_c}^+=0$, i.e. $m^*(\beta_c)=0$.   

 The full statement of continuity of the Gibbs state then follows by the known general result which is presented as Proposition~\ref{prop:GScont} in Appendix~\ref{app:continuity}, and the observation that for translation invariant  models $m^*(\beta_c)=0$ implies that $\langle \sigma_x\rangle_{\beta_c}^+ =0$ for all sites $x\in \bbZ^d$.   \\\ 
\qed

\subsection{Proof of Corollary~\ref{cor:main}}\label{sec:3}  

We include the proof of Corollary~\ref{cor:main} only for completeness, as the argument is not new.  
  
The Gaussian domination bound \eqref{eq:89}  (also known as the {\it infrared} bound) can be equivalently formulated as the statement  that for any   function  
 $(v_x)\in\bbC^{\bbT_L}$, the correlation function $F_{L,\beta}(x,y) =  \langle \sigma_x\sigma_y\rangle_{\bbT_L,\beta} $  satisfies:
\be  \label{eq:finitebound}
\sum_{x,y\in \bbT_L}v_x\overline{v_y}F_{L,\beta}(x,y)\le \ \frac{1}{2\beta}   \sum_{x,y\in \bbT_L}v_x\overline{v_y}G_L(x,y) 
\ + \  \frac{1}{L^d}    \widehat F_{L,\beta}(0)   \left| \sum_{x\in  \bbT_L} v_x \right|^2  
~, 
\ee 
where 
\be  \label{eq:GL} 
G_L(x,y)=\sum_{p \  \in \ \bbT_L^\star\setminus\{0\}} 
\frac{1}{L^d}  \   \  \  \frac{e^{i \,  p \cdot (x-y) }}{E(p)} 
~ .
\ee 
 
The  sum in \eqref{eq:GL} (with the weights $\frac{1}{L^d} $) forms a Riemann approximation. 
Under the assumed Condition \eqref{eq:Ep_condition} ($L^1$ integrability of $1/E(p)$) standard approximation arguments allow to conclude  the pointwise convergence 
\be 
\lim_{L\rightarrow \infty}G_L(x,y) \  =\  \int_{[-\pi,\pi]^d}\frac{{\rm d}p}{(2\pi)^d}\frac{e^{i \, p \cdot (x-y) }} {E(p)} \ =:\ G(x,y)
~ , 
\ee 
to a function satisfying:  
\be \label{eq:63} 
\lim_{N\rightarrow \infty}\frac {1}{|\Lambda_N|^2}\sum_{x,y\in \Lambda_N}G(x,y)\  =\  0 \, .
\ee 

Furthermore,    for  $\beta < \beta_c$ and any fixed function $v$ of bounded support the  zero momentum  term in \eqref{eq:finitebound} can be omitted  since
 \begin{eqnarray}  \label{eq:67}
0\  \le \   \lim_{L\to \infty}  \frac{1}{L^d}    \widehat F_{L,\beta}(0)  & = & 
  \lim_{L\to \infty}    \frac{1}{|\bbT_L|} \sum_{x\in \bbT_L} \langle\sigma_0\sigma_x\rangle_{\bbT_L,\beta}  \notag \\ 
 & \le &   \lim_{L\to \infty}    \frac{1}{|\Lambda_L|} \sum_{x\in \Lambda_L} \langle\sigma_0\sigma_x\rangle_{\Lambda_L,\beta}^+ \ = \ 0    \, ,
\end{eqnarray}  
where use is made of the fact that  $\sum_{x\in \bbZ^d } \langle\sigma_0\sigma_x\rangle_{\beta}^+ < \infty$ for any $\beta < \beta_c$ (\cite{ABF87}).

To apply the above to the free boundary condition correlation function $\langle \sigma_x\sigma_y\rangle_{\Lambda_L,\beta}^0 $ let us first note that, by the  Griffith inequality \cite{Gri67},  the latter are monotone increasing functions of $\beta$ and also monotone increasing in $L$.    Standard semicontinuity arguments which are applicable under such monotonicity assumptions allow to conclude that  for each $x,y\in \bbZ^d$   
\be  \label{eq:77}
   \langle\sigma_x\sigma_y\rangle_{\beta_c}^0 \ = \  
\lim_{\beta  \nearrow \beta_c} \   \lim_{L\to \infty}   \langle\sigma_x\sigma_y\rangle_{\Lambda_L, \beta}^0 \   
\ee

Combining  \eqref{eq:77} with  \eqref{eq:finitebound} for the function $v_x = \frac{1}{|\Lambda_n|}  {\bf 1}[ x\in\Lambda_n] $, and using  \eqref{eq:67},  we find that for each fixed $n$: 
\begin{eqnarray}  \label{eq:65}
\frac{1}{|\Lambda_n|^2}  \sum_{x,y\in\Lambda_n}\langle\sigma_x\sigma_y\rangle_{\beta_c}^0  & =  &  
\lim_{\beta  \nearrow \beta_c} \   \lim_{L\to \infty}  \frac{1}{|\Lambda_n|^2}  \sum_{x,y\in\Lambda_n}\langle\sigma_x\sigma_y\rangle_{\Lambda_L, \beta}^0 \   \notag 
    \\[2ex]  
 & \le  &  
\lim_{\beta  \nearrow \beta_c} \   \lim_{L\to \infty}  \frac{1}{|\Lambda_n|^2}  \sum_{x,y\in\Lambda_n}\langle\sigma_x\sigma_y\rangle_{\bbT_L, \beta} \notag
 \\[3ex]  
 & \le   &     \frac1{2\beta_c} \   \frac{1}{|\Lambda_n|^2}  \sum_{x,y\in \Lambda_n}G(x,y)   \   ~.
\end{eqnarray}   
where another use is made of the  Griffith inequality \cite{Gri67}, by which  $\langle\sigma_x\sigma_y\rangle_{\Lambda_L, \beta}^0\le \langle\sigma_x\sigma_y\rangle_{\bbT_L, \beta}$.  
Incorporating now \eqref{eq:63} in \eqref{eq:65}, we see that if Condition \eqref{eq:Ep_condition}  holds,  then $\widehat M_{LRO}(\beta_c) = 0$.  Thus, by Theorem~\ref{thm:continuity} the spontaneous magnetization $m^*(\beta)$ vanishes  continuously at $\beta_c$,  as claimed in Corollary~\ref{cor:main}. 
\qed

\vspace{1cm} 

\appendix 
\noindent {\LARGE \bf Appendix}

\section{The mixing properties of the two Gibbs states} \label{app:ergodicity}

\begin{proposition}  \label{prop:mixing} 
For any translation invariant ferromagnetic Ising model on $\bbZ^d$ satisfying conditions $\bf{ C1 - C4}$:
\begin{enumerate} 
\item 
the state  $\big \langle  \cdots   \big \rangle^{+}_\beta$  is ergodic and mixing, in the sense that for any pair of local functions  $F,G: \{-1,1\}^{\bbZ^d} \mapsto \bbR $ the following limit exists and satisfies:
\be  \label{eq:mixing} 
\lim_{\|x\| \to \infty}  \big \langle F\times G \circ \tau_x   \big \rangle^{+}_\beta \ = \ 
\big \langle  F  \big \rangle^{+}_\beta\  \times \    
\big \langle  G  \big \rangle^{+}_\beta \, . 
\ee
\item The state  $\big \langle  \cdots   \big \rangle^0_\beta$  is ergodic and mixing in its restriction to the $\sigma$-algebra of even events, i.e. it satisfies the analog of   \eqref{eq:mixing}   for
functions such that $F(-\sigma) = F(\sigma)$ and $G(-\sigma) = G(\sigma)$.
\end{enumerate} 
\end{proposition}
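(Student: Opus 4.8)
The plan is to deduce both statements from the triviality of the appropriate tail $\sigma$-algebra and then to promote tail-triviality to spatial mixing by a routine reverse-martingale argument; the one genuinely delicate point is the tail-triviality for the \emph{free} state restricted to even events, which does not reduce to part 1.

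For part 1, I would first recall that $\langle\,\cdot\,\rangle^+_\beta$ is the FKG-maximal Gibbs state: it is the decreasing limit of the finite-volume $+$ states, each of which stochastically dominates every other finite-volume state (monotonicity in boundary conditions, a consequence of the FKG inequality), so every Gibbs state is stochastically dominated by $\langle\,\cdot\,\rangle^+_\beta$. A maximal element of the Gibbs simplex is extremal --- if $\langle\,\cdot\,\rangle^+_\beta=\tfrac12(\mu_1+\mu_2)$ with $\mu_1,\mu_2$ Gibbs, then evaluating on increasing local functions forces $\mu_1=\mu_2=\langle\,\cdot\,\rangle^+_\beta$ --- and an extremal Gibbs state has trivial tail $\sigma$-algebra $\calT$. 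Given this, for a local function $F$ the reverse martingale $\mathbb{E}^+_\beta[F\mid \sigma(\sigma_y:y\notin\Lambda_n)]$ converges in $L^1$ to $\langle F\rangle^+_\beta$; since for $\|x\|_1$ large enough $G\circ\tau_x$ is measurable with respect to $\sigma(\sigma_y:y\notin\Lambda_n)$ (here we use that $G$ is local), writing
\[
\langle F\,(G\circ\tau_x)\rangle^+_\beta=\big\langle (G\circ\tau_x)\,\big(\mathbb{E}^+_\beta[F\mid\sigma(\sigma_y:y\notin\Lambda_n)]-\langle F\rangle^+_\beta\big)\big\rangle^+_\beta+\langle F\rangle^+_\beta\,\langle G\rangle^+_\beta
\]
and bounding the first term by $\|G\|_\infty$ times the $L^1$-distance (which tends to $0$ as $n\to\infty$, uniformly in $x$) yields \eqref{eq:mixing}. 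Ergodicity is an immediate consequence of mixing.

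For part 2 the scheme is identical once one knows that the restriction of $\calT$ to the even $\sigma$-algebra $\calE$ is $\langle\,\cdot\,\rangle^0_\beta$-trivial: replacing $\sigma(\sigma_y:y\notin\Lambda_n)$ by $\calE_n:=\calE\cap\sigma(\sigma_y:y\notin\Lambda_n)$ --- a decreasing family with $\bigcap_n\calE_n=\calE\cap\calT$ --- the same reverse-martingale estimate applies verbatim to even local $F,G$. To establish that $\calE\cap\calT$ is $\langle\,\cdot\,\rangle^0_\beta$-trivial I would pass to the Edwards--Sokal coupling of the free Ising state with the free random-cluster measure $\phi^0$ at $q=2$: conditionally on the edge configuration $\omega$, an even spin observable depends on the per-cluster signs only through products of the signs of \emph{finite} clusters (normalised, when it exists, against the $\phi^0$-a.s. unique infinite cluster). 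Hence an even tail event is insensitive both to flipping the sign of any single finite cluster (finitely many spin flips) and to the global spin flip, which should force it to be, up to $\phi^0$-null sets, a tail-measurable function of $\omega$ alone; since $\phi^0$ is the FKG-minimal, hence extremal, random-cluster measure, its tail $\sigma$-algebra is trivial, and the triviality descends to $\calE\cap\calT$.

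The main obstacle is exactly this last reduction for the free state. One cannot simply borrow the mixing of $\langle\,\cdot\,\rangle^+_\beta$ from part 1, because $\langle\,\cdot\,\rangle^0_\beta=\tfrac12(\langle\,\cdot\,\rangle^+_\beta+\langle\,\cdot\,\rangle^-_\beta)$ is precisely the differentiability relation \eqref{eq:f=pm}, which need not hold (it is not known for the $1/r^2$ chain at $\beta_c$); one must argue intrinsically. The FK route does this cleanly because the only source of non-triviality in the tail of $\langle\,\cdot\,\rangle^0_\beta$ --- the ``phase label'' distinguishing $\langle\,\cdot\,\rangle^+_\beta$ from $\langle\,\cdot\,\rangle^-_\beta$, carried by the sign of the infinite FK cluster --- is an \emph{odd} observable and so lies outside $\calE$. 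The care needed is in the coupling step: one must handle connections that route through the window $\Lambda_n$ before invoking uniqueness of the infinite cluster; everything after that is the routine martingale computation of part 1.
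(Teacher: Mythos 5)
Your route is genuinely different from the paper's, and it is worth seeing what the paper does instead. The paper proves both parts at once by a short correlation-inequality sandwich: taking $F=\sigma_A$ and $G=e^{-\beta\widetilde H}$ for a finite perturbation $\widetilde H=-\sum_B\widetilde J_B\sigma_B$, it interprets the normalized expectation $\langle \sigma_A\,|\,e^{-\beta\widetilde H\circ\tau_x}\rangle^{\#}_\beta$ as an Ising state with couplings $J_B+\widetilde J_{B-x}$ and uses Griffiths monotonicity in the couplings to trap it between $\langle\sigma_A\rangle^{\#}_{\Lambda_{\|x\|_1/2},\beta}$ and $\langle\sigma_A\rangle^{\#}_\beta$; letting $\|x\|\to\infty$ collapses the sandwich, and linearity plus Stone--Weierstrass upgrades $\sigma_A$ and $e^{-\beta\widetilde H}$ to general local functions. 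The only asymmetry between the two boundary conditions is the admissible sign of $\widetilde J_B$ ($\ge 0$ for $+$; in $[-J_B,0]$ for free), and condition {\bf C4} is precisely what makes the free-state perturbations $\{\sigma_x\sigma_y: J_{x,y}>0\}$ generate the whole even $\sigma$-algebra. This is self-contained: no extremality, no tail $\sigma$-algebras, no FK representation.

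Your part 1 (FKG-maximality $\Rightarrow$ extremality $\Rightarrow$ tail triviality, then the backward-martingale computation) is a correct and standard alternative. Part 2, however, stops short of a proof at exactly the step you flag: the claim that an even spin-tail event is, modulo null sets, a tail-measurable function of the FK edge configuration. Closing it requires (i) justifying the infinite-volume Edwards--Sokal coupling of $\langle\cdot\rangle^0_\beta$ with the free random-cluster measure, including the sign of the infinite cluster; (ii) a conditional zero--one argument on the i.i.d.\ finite-cluster signs, evenness plus uniqueness of the infinite FK cluster to remove the dependence on its sign, and then a separate argument that the resulting function of $\omega$ is edge-tail measurable (nontrivial, since clusters leaving $\Lambda_n$ may be glued inside $\Lambda_n$); and (iii) tail-triviality of $\phi^0$, which is a genuine theorem proved by a direct positive-association argument --- ``FKG-minimal hence extremal hence tail-trivial'' is not automatic here because the random-cluster specification is not quasilocal. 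All of this can be carried out, but as written it is a plan rather than a proof, and it imports substantially heavier machinery than the paper's Griffiths sandwich, which disposes of the free-state case in two displayed inequalities.
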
 

The proof is based on the Griffith inequality \cite{Gri67} which implies the   
monotonicity, in the coupling strength, of the expectation values of  $\sigma_A := \prod_{x\in A} \sigma_x$ in ferromagnetic Ising spin systems with Hamiltonians of the form 
 \be 
 H(\sigma) \ = \  -  \sum_{B\subset \Omega_2} J_B  \, \sigma_B \, , 
 \ee 
with $J_B \ge 0$ for all finite subset $B$ of $\Omega_2$, and $\sigma_B := \prod_{x\in B} \sigma_x $.  Note that the Hamiltonians satisfying $\bf{ C1 - C4}$ are of this form with $J_B=0$ for any set of cardinality different from 2.

\begin{proof} Let $F=\sigma_A$ with $A$ finite, and let $G=e^{-\widetilde H}$, with  $\widetilde H $ given by a finite sum of the form  $\widetilde H=-\sum_{B\subset\bbZ^d}    \widetilde J_B \sigma_B$. We have
$$
 \big \langle \sigma_A \,  e^{-\beta\widetilde H \circ \tau_x } \big \rangle^\#_{ \beta }  \ = \   
 \big \langle \sigma_A \,  \big | e^{-\beta\widetilde H \circ \tau_x  } \big \rangle^{\#}_{\beta }  \times 
 \big \langle  e^{-\beta\widetilde H \circ \tau_x }  \, \big \rangle^\#_{ \beta } \ = \   
 \big \langle \sigma_A \,  \big | e^{-\beta\widetilde H \circ \tau_x  } \big \rangle^{\#}_{\beta }  \times 
 \big \langle  e^{-\beta\widetilde H}  \, \big \rangle^\#_{ \beta } \, ,
$$
where the first term on the right-hand side can be interpreted as an Ising measure with coupling constants equal to $J_B+\widetilde J_{B-x}$. Now, the above mentioned monotonicity,  which follows from the 
 Griffith inequality \cite{Gri67}, implies that
\begin{itemize}[nolistsep]
\item if $\widetilde J_B\ge0$ for any $B\subset\bbZ^d$, then
\be  \label{eq:bracket_+}
\big \langle \sigma_A \big \rangle^{+}_{\Lambda_{ \|x\|_1 /2 ,\beta }  } \ \ge \ 
\big \langle \sigma_A \, \big  |  e^{ -\beta\widetilde H \circ \tau_x  }  
\big \rangle^{+}_\beta    \ \ge \ 
\big \langle \sigma_A \big \rangle^{+}_\beta  \, .
\ee 
\item if $-J_B\le \widetilde J_B\le 0$ for any finite subset $B$ of $\bbZ^d$, then
\be  \label{eq:bracket_f}
\big \langle \sigma_A \big \rangle^0_{\Lambda_{ \|x\|_1 /2 ,\beta }  } \ \le \ 
\big \langle \sigma_A \, \big  |  e^{ -\beta\widetilde H \circ \tau_x  }  
\big \rangle^0_\beta    \ \le \ 
\big \langle \sigma_A \big \rangle^0_\beta  \, .
\ee 
\end{itemize}
The convergence of $\langle  \cdots \rangle^{\#}_{\Lambda_{ \|x\|_1 /2 ,\beta } }$ to $\langle  \cdots \rangle^{\#}_\beta$ thus implies that
$$
\lim_{\|x\| \to \infty} 
\big \langle  \sigma_A |   e^{-\beta\widetilde H  \circ \tau_x  }  \big \rangle^{\#}_{\beta }  \  = \   
 \big \langle \sigma_A \big \rangle^{\#}_{\beta }$$
and therefore  
\be\label{eq:200}\lim_{\|x\| \to \infty} 
\big \langle \sigma_A \,  e^{-\beta\widetilde H \circ \tau_x } \big \rangle^\#_{ \beta }  \  = \   
 \big \langle \sigma_A \big \rangle^{\#}_{\beta }\big \langle e^{-\widetilde H} \big \rangle^\#_{ \beta }\ee
for $\widetilde J_B\ge0$ ($\forall B\subset\bbZ^d$) in the case of $+$ boundary conditions, and $-J_B\le \widetilde J_B\le 0$ ($\forall B\subset\bbZ^d$) for free boundary conditions. 

By linearity, in equation \eqref{eq:200} the term $\sigma_A$ may be replaced by arbitrary polynomials in $\{\sigma_x\}$ and thus, by the Stone-Weierstrass theorem, the statement extends to all continuous functions $F(\sigma)$, and through that to all bounded measurable functions.  Similarly, using also the convergence of the  power series expansion,   
the factor $e^{-\beta\widetilde H}$ can be replaced by  bounded measurable functions which are spanned by the collection of terms which are allowed in $\widetilde H$.  For $+$ boundary conditions the latter includes $\sigma_A$ for all bounded sets, and thus  \eqref{eq:200}  extends to mixing, and hence  ergodicity, of the $+$ state.   For the free boundary conditions the restriction in \eqref{eq:bracket_f}, 
 which does not allow $\widetilde J_B$  to overturn the ferromagnetic nature of the state,  excludes odd functions.  However the argument still allows to conclude the mixing property on the  $\sigma$-algebra generated by the collection of random variables $\{ \sigma_{\{x,y\}} \, : \,  J_{x,y} >0 \}$.  That is easily seen to consist of  the $\sigma$-algebra of even events. 
\end{proof}

 \medskip 

\section{From the continuity of the magnetization to the continuity  of the Gibbs state(s)} 
 \label{app:continuity} 

In the last step of the proof of Theorem~\ref{thm:continuity} use was made of the following known property of the Ising model's Gibbs states~\cite{LML72}.  It is presented here for completeness, and in a somewhat streamlined form  which experts would also find familiar.
\begin{proposition} \label{prop:GScont}
In any ferromagnetic Ising model with pair interactions, on a graph $\bbG$  with  $\sum_{y\in \bbG} J_{x,y}  < \infty$ for all $x\in \bbG$, for $h=0$ and any  $\beta_0 \in [0,\infty)$ the following conditions are equivalent
\begin{enumerate} 
\item[(a)] for all $x\in \bbG$:  $\langle \sigma_x \rangle^+_{\beta_0} = 0$, 
\item[(b)]  there is a  unique   Gibbs state at  $\beta_0$, as well as all $\beta < \beta_0$,  
and the Gibbs state is continuous in the sense that for any choice of boundary conditions, which may also vary with $\beta$ for $\beta>\beta_0$: 
\be 
\lim_{\beta \to \beta_0}  \langle \cdots \rangle^{\#(\beta)}_\beta  \ =\ \langle \cdots \rangle_{\beta_0}     \, . 
\ee  
\end{enumerate}  
\end{proposition}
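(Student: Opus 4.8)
The plan is to dispatch the implication (b)$\Rightarrow$(a) in one line and then build (a)$\Rightarrow$(b) out of three standard ingredients: the FKG inequality, Griffiths' (GKS) monotonicity of $\langle\sigma_A\rangle$ in the coupling strengths, and the closedness of the set of Gibbs measures under weak limits taken along a convergent family of absolutely summable interactions. The direction that matters for the main text is (a)$\Rightarrow$(b), since Theorem~\ref{thm:continuity} produces $\langle\sigma_x\rangle^+_{\beta_c}=0$ for all $x$ and then invokes this proposition. For (b)$\Rightarrow$(a): uniqueness at $\beta_0$ forces $\langle\cdot\rangle^+_{\beta_0}=\langle\cdot\rangle^-_{\beta_0}$, and since $\langle\sigma_x\rangle^-_{\beta_0}=-\langle\sigma_x\rangle^+_{\beta_0}$ under the global spin flip, we get $\langle\sigma_x\rangle^+_{\beta_0}=0$.

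For (a)$\Rightarrow$(b), I would first establish uniqueness. By spin-flip symmetry $\langle\sigma_x\rangle^-_{\beta_0}=0$ as well, so the increasing local functions $\mathbf 1[\sigma_x=+1]$ have equal expectations under $\langle\cdot\rangle^+_{\beta_0}$ and $\langle\cdot\rangle^-_{\beta_0}$. The FKG inequality gives the stochastic domination $\langle\cdot\rangle^-_{\beta_0}\preceq\langle\cdot\rangle^+_{\beta_0}$, so Strassen's theorem furnishes a monotone coupling $(X,Y)$ with $X\le Y$ coordinatewise a.s.; equality of the one-dimensional marginals then forces $X_x=Y_x$ a.s. for every $x$, hence $\langle\cdot\rangle^+_{\beta_0}=\langle\cdot\rangle^-_{\beta_0}$ (this is the classical fact of \cite{LML72}). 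Since $\langle\cdot\rangle^+_\beta$ and $\langle\cdot\rangle^-_\beta$ are respectively the FKG-maximal and FKG-minimal Gibbs states — every Gibbs measure lies between them when tested on increasing local functions — their coincidence yields a \emph{unique} Gibbs state at $\beta_0$. GKS monotonicity then gives $0\le\langle\sigma_x\rangle^+_\beta\le\langle\sigma_x\rangle^+_{\beta_0}=0$ for every $\beta\le\beta_0$, so the same argument yields uniqueness at all $\beta\le\beta_0$.

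It remains to prove continuity. Fix a finite $A$. Writing $\langle\cdot\rangle^+_\beta$ as a volume-decreasing limit of finite-volume ferromagnetic states with a boundary field, GKS monotonicity shows $\beta\mapsto\langle\sigma_A\rangle^+_\beta$ is nondecreasing, and since each finite-volume expectation is continuous in $\beta$ the limit is upper semicontinuous; hence the one-sided limits $\lim_{\beta\to\beta_0^{\pm}}\langle\sigma_A\rangle^+_\beta$ exist. Letting $A$ range over all finite sets, these limits define two states at $\beta_0$, each of which is a Gibbs measure for the interaction $\beta_0J$ — this is where $\sum_{y}J_{x,y}<\infty$ is used, making the local specification Feller and quasilocal so that the DLR equations survive the limit $\beta\to\beta_0$. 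By the uniqueness just established, both limits equal $\langle\cdot\rangle_{\beta_0}$, and the spin-flipped argument gives the same for $\langle\cdot\rangle^-_\beta$. Finally, for any (possibly $\beta$-dependent) boundary conditions $\#(\beta)$ — relevant only for $\beta>\beta_0$, where we do not otherwise control the state — the sandwich $\langle f\rangle^-_\beta\le\langle f\rangle^{\#(\beta)}_\beta\le\langle f\rangle^+_\beta$ on every increasing local $f$, combined with the elementary fact that each $\sigma_A$ is a finite linear combination of increasing local functions, forces $\langle\cdot\rangle^{\#(\beta)}_\beta\to\langle\cdot\rangle_{\beta_0}$ as $\beta\to\beta_0$.

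I expect the only step needing genuine care to be the assertion that the monotone limit of $\langle\cdot\rangle^+_\beta$ as $\beta\to\beta_0$ is again a Gibbs state at $\beta_0$, i.e. that the Gibbs property is preserved under weak limits along a converging sequence of interactions; this is precisely where the hypothesis $\sum_y J_{x,y}<\infty$ enters, guaranteeing that the conditional energies, and hence the specification kernels, depend continuously on the configuration and on $\beta$. Everything else is a routine assembly of FKG, Strassen's theorem, and GKS monotonicity.
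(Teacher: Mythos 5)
Your argument is correct and follows essentially the same route as the paper: FKG bracketing together with the Lebowitz--Martin-L\"of criterion to turn (a) into uniqueness, GKS monotonicity in $\beta$ and in the volume to get semicontinuity and the existence of the one-sided limits of $\langle\sigma_A\rangle^{\pm}_\beta$, and the FKG sandwich to handle arbitrary (possibly $\beta$-dependent) boundary conditions. The only divergence is local: where you identify the one-sided limits at $\beta_0$ by invoking closedness of the set of Gibbs measures under weak limits along converging absolutely summable interactions, the paper instead identifies the limit from below via the left-continuity of the free-boundary state (a supremum of continuous, $\beta$-increasing finite-volume expectations) combined with uniqueness at $\beta_0$; both steps are standard and valid here.
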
 
For clarity: the statement refers to the Gibbs states in the limit in which the graph has no boundary sites. 
\begin{proof} 
The relations are based on two   properties of the systems' Gibbs state.   First is that at all $\beta$ the Gibbs states are ``bracketed'' in the sense of Fortuin-Kasteleyn-Ginibre~\cite{FKG71} between the $\pm $ boundary condition states and the two 
are equal if and only if  condition (a) holds (\cite{LML72}).   Thus (a) is equivalent to the uniqueness of the Gibbs state.  

The added statement of the Gibbs state's continuity is based on semi-continuity arguments, which imply that for all finite $A\subset \bbG$: : 
\begin{eqnarray} 
\langle \sigma_A \rangle^{+}_\beta  & = &  \lim_{\varepsilon \searrow 0}  \langle \sigma_A \rangle^{+}_{\beta+\varepsilon}\, \,  ,  \notag \\ 
\\
\langle \sigma_A\rangle^{0}_\beta  & = &  \lim_{\varepsilon \searrow 0}  \langle \sigma_A \rangle^{0}_{\beta-\varepsilon}  \, \, . 
\notag
\end{eqnarray} 
The proof of these relations is based on the Griffith inequalities, by which the finite volume functions 
$\langle \sigma_A \rangle^{\#}_{\Lambda,\beta}$ are: {\em i.} monotone increasing in $\beta$, and {\em ii.}  monotone in $\Lambda\subset \bbG$: increasing for $\#=0$ and decreasing for $\#=+$.   This makes applicable  the general semicontinuity principle that pointwise monotone limits of sequences of continuous functions are upper semicontinuous in the increasing case, and lower semicontinuous in the decreasing case.    
 
Thus, under the assumption  of the uniqueness of state for $\beta_0$, which directly extends to uniqueness for all $\beta \le  \beta_0$, we have: 
\be  
\langle \cdots \rangle_{\beta_0}   \  = \  \lim_{\varepsilon \to  0}  \langle \cdot \rangle^{+}_{\beta+\varepsilon}  
\ = \   \lim_{\varepsilon \to  0}  \langle \cdot \rangle^{-}_{\beta+\varepsilon} 
\ee  
which implies  the full statement of continuity throughout the FKG bracketing principle.  \\ 
\end{proof}

\noindent {\bf Remark:}  Related to the above is the afore mentioned relation of~\cite{Leb77}  between  the continuity of the states $\langle \cdot \rangle^0_\beta$ and the differentiability of the free energy in $\beta$ (for graphs satisfying the van Hove growth condition).   
It would be of interest to see an unconditional derivation of    \eqref{eq:f=pm} 
which may well be valid for  translation invariant models at  all $\beta$.

\paragraph{Acknowledgements}
We thank Vincent Tassion, Gady Kozma, Nick Crawford and Marek Biskup for stimulating discussions.  We also thank Christophe Garban and Yvan Velenik for  carefully reading the manuscript. This project was completed during a stay in the Weizmann Institute and we thank the institution for its kind hospitality and for providing a perfect environment for research.  
The work was support in parts by Simons Fellowship and NSF grant PHY-1104596  (MA),  ERC grant AG CONFRA, FNS and NCCR SwissMap 
(HD),
and the Weizmann Institute's Visiting Scientists Program.
 
\addcontentsline{toc}{chapter}{Bibliography}   


\providecommand{\bysame}{\leavevmode\hbox to3em{\hrulefill}\thinspace}
\providecommand{\MR}{\relax\ifhmode\unskip\space\fi MR }
\providecommand{\MRhref}[2]{%
  \href{http://www.ams.org/mathscinet-getitem?mr=#1}{#2}
}
\providecommand{\href}[2]{#2}

\begin{flushright}
\footnotesize\obeylines
     \textsc{Departments of Physics and Mathematics \\ 
              Princeton University\\  Princeton, United States \\ 
              E-mail: \texttt{aizenman@princeton.edu} }
   \medbreak
\textsc{D\'epartement de Math\'ematiques \\ 
             Universit\'e de Gen\`eve \\  Gen\`eve, Switzerland \\ 
       {E-mail:} \texttt{hugo.duminil@unige.ch}  }
  \medbreak
   \textsc{IMPA \\ 
   Rio de Janeiro, Brazil \\ 
  \textsc{E-mail:} \texttt{vladas@impa.br} }
\medbreak
\end{flushright}

\end{document}